\newif\if@restonecol
\newtheorem{remark}{\bf Remark}
\newtheorem{notation}{\bf Notation}
\newtheorem{lemma}{\bf Lemma}
\newtheorem{theorem}{\bf Theorem}
\newtheorem{example}{\bf Example}
\newtheorem{model}{\bf Model}
\newtheorem{corollary}{\bf Corollary}
\newtheorem{proposition}{\bf Proposition}
\newtheorem{definition}{\bf Definition}
\def \cV { {\mathcal V} }
\def \cM { {\mathcal M} }
\def \XH { {X^{(h)}} }
\def \Xh { X_{h} }
\newcommand{\LX}{{{\mathcal L}_X}}
\newcommand{\DD}{{{\mathcal D}_X}}
\DeclareMathOperator{\MLdegree}{MLdegree}
\DeclareMathOperator{\sMLdegree}{sMLdegree}
\title{A Probabilistic Algorithm for Computing Data-Discriminants of Likelihood Equations}
\author{Jose Israel Rodriguez and Xiaoxian Tang}
\begin{document}

\maketitle 



\begin{abstract}
An algebraic approach to the maximum likelihood estimation problem is to solve a very structured parameterized polynomial system called likelihood equations that have finitely many complex (real or non-real) solutions.  
The only solutions that are statistically meaningful are the real solutions with positive coordinates. 
In order to classify the parameters (data) according to the number of real/positive solutions, 
we study how to efficiently compute the discriminants, say data-discriminants (DD), of the likelihood equations. 
We  develop a probabilistic algorithm with three different strategies for computing DDs. 
Our implemented probabilistic algorithm based on {\tt Maple} and {\tt FGb} 
is more efficient than our previous version \cite{RT2015}
and  is also more efficient than the standard elimination for larger benchmarks.  
 By applying {\tt RAGlib} to a DD we compute, we give the real root classification of $3$ by $3$ symmetric matrix model.  
 \end{abstract}

{\bf Key Words:} Maximum likelihood estimation, Likelihood equation, Discriminant

\section{Introduction}\label{introduction}

In this work, we address the statistics problem of {\em maximum likelihood estimation} (MLE). 
A statistical model is a subset of $\mathbb{R}^{n+1}$ whose points have nonnegative coordinates summing to one; these points represent probability distributions. 
We will be interested in statistical models that are defined by
 zero sets of polynomials restricted to the positive orthant. 
 The study of such models is a central part of algebraic statistics.
 
Given a data set and statistical model, we ask, ``Which point in the model best describes the data?" 
One way to answer this question is by maximum likelihood estimation. 
By determining the point in the model that maximizes the likelihood function, one finds the probability distribution that best describes the data.

For the models we consider, the data is discrete; more specifically, the data is a list of integers representing counts. 
The likelihood function we consider on the model is given by the monomial 
$$\ell_u(p):= p_0^{u_0}p_1^{u_1}\cdots p_n^{u_n},$$
where $u=(u_0,\dots,u_n)$ is our data.

The \textit{maximum likelihood estimate} is the point in the statistical model that maximizes the likelihood function. 
To determine this estimate,  one can solve the likelihood equations to determine critical points of the likelihood function. 
This set of critical points contains all local maxima. 
So when the maximum likelihood estimate is a local maximum, it will be one  of these critical points.
When the maximum likelihood estimate is not a local maximum, then it is in the boundary (in the Euclidean topology) or the singular locus of the model. 
In those cases, one restricts to those loci's Zariski closure and performs an analysis in this restriction. 
Iterating this procedure, one solves the global optimization problem of~MLE.

The number of complex solutions to the likelihood equations (for general data) is a topological invariant called the \textit{maximum likelihood degree}.
For discrete models, the maximum likelihood degree was shown to be an Euler-characteristic in the smooth case \cite{Huh13} and recently shown to be a weighted sum of Euler characteristics in the singular case \cite{BW15}. 
In this paper, we will only consider discrete data. 
For a recent survey for the ML degree, one can refer to \cite{HS2014}.

The ML degree measures the algebraic complexity of maximum likelihood estimation by counting the number of complex solutions (real and nonreal solutions) to the likelihood equations. 
In statistics only real solutions with positive coordinates are relevant, 
and to analyze the real root classification problem, one studies a discriminant of the system. 
Discriminants are a large topic studied in algebraic geometry, and
they are important because they are used in combination with Ehresmann's Theorem (Theorem \ref{theorem2})
for real root classification. 
In principle, discriminants can be computed with Gr\"obner bases to determine
 elimination ideals \cite[Chapter 3]{CLO2007}    or with  geometric resolutions \cite{GHMMP1998}. 
In this paper we give a probabilistic interpolation algorithm to determine discriminants of the likelihood equations. 
These interpolation methods can be applied to other parameterized system of equations, for example determining the Euclidean distance degree, which was defined in \cite{DHOST13}.

Our main contribution is a probabilistic algorithm (Algorithm \ref{interpolation}) that is more efficient for larger-sized models than the standard algorithm (Algorithm \ref{dxj}). Tables \ref{table1dense}--\ref{literatureOld}  show our contribution is nontrivial with great performance increases in particular for Model 4. 
Based on the discriminants we have computed by the probabilistic algorithm, we give the real root classification for $3$ by $3$ symmetric matrix model. 
 The main differences between this paper and our ISSAC'15 paper \cite{RT2015} are listed below.

\textbullet~We have reimplemented \cite[Algorithms 1--2]{RT2015} based on {\tt Maple} and {\tt FGb}. Comparing \cite[Tables 1--2]{RT2015} and
Tables 1--2 in Section \ref{experiment}, we see the new implementation performs much better.  A typical example is Model \ref{ex:l4}. \cite{RT2015} states the old implementation of Algorithm \ref{interpolation} (Strategy 2) takes 30 days to compute the discriminant while our new 
implementation takes less than 30 minutes. 

\textbullet~In Section \ref{algorithm}, we add Section \ref{strategies} in which we discuss the motivations and effectiveness of different strategies.  
 We add Strategy 3 which improves the efficiency of sampling over Strategy 2 in some cases. 
 We also add Lemma \ref{normal} in Section \ref{lemmas}, which guarantees the correctness of Strategy 3. 
 
\textbullet~In Section \ref{experiment}, we present the new timings in  Tables \ref{table1dense}--\ref{literatureOld} based on our new implementation. We try more examples such as
Dense Models 3--6 and Models \ref{ex:l5}--\ref{ex:l9}.  We add Table \ref{literatureNew} which compares Strategy 2 and Strategy 3.

Now we motivate our study with an illustrating example.
\begin{example}[Illustrative Example]\label{ex:introduction}
 Suppose we have a weighted four-sided die such that 
the  {\em probability} $p_i$ of observing side $i$ $(i=0, 1, 2, 3)$ of the die satisfies the constraint $p_0+2p_1+3p_2-4p_3=0$. 
We toss the die 1000 times and record a $4$-dimensional {\em data} vector $(u_0, u_1, u_2, u_3)$, where $u_i$ is the number of times we observe the side $i$. 
 We want to determine the probability distribution $(p_0,p_1,p_2,p_3)\in\mathbb{R}_{>0}^4$ that best explains the data subject to the constraint by maximum likelihood estimation:
  
  \begin{center}
{\bf Maximize the {\em likelihood function}
${p_{0}^{u_{0}}p_{1}^{u_{1}}p_{2}^{u_{2}}p_{3}^{u_{3}}} $ subject to}
{\[p_0+2p_1+3p_2-4p_3=0,
p_{0}+ p_{1}+p_{2}+ p_{3}=1, \]
\[ p_{0}>0, p_{1}>0, p_{2}>0,  \text{and}\;  p_{3}>0.\]}
\end{center}

For some statistical models,  the  MLE  problem can be solved by  well known  local {\em hill climbing} algorithms such as the expectation--maximization algorithm. 
However, the local method can fail if there is more than one local maximum.  
 One way to remedy this, is by solving the system of  {\em likelihood equations} \cite{SAB2005, CHKS2006}:
  \begin{align*}
F_0&=p_{0}\lambda_1+p_{0}\lambda_2-  u_{0}& F_3&=p_{3}\lambda_1-4p_{3}\lambda_2-   u_{3}\\
F_1&=p_{1}\lambda_1+2p_{1}\lambda_2 -  u_{1}& F_4&=p_0+2p_1+3p_2-4p_3\\
F_2&=p_{2}\lambda_1+3p_{2}\lambda_2-  u_{2}& F_5&=p_{0} + p_{1} +p_{2}+p_{3} - 1
\end{align*}
where $\lambda_1$ and $\lambda_2$ are newly introduced indeterminates (Lagrange multipliers) for formulating the likelihood equations. More specifically, for given $(u_{0},  u_{1},  u_{2},  u_{3})$,  if $(p_{0},p_{1},p_{2},p_{3})$ is a critical point of the likelihood function, then there exist  numbers $\lambda_1$ and $\lambda_2$ such that $(p_{0}, p_{1},p_{2}, p_{3}, \lambda_1, \lambda_2)$ is a solution of the polynomial system.    

For general data $u=(u_{0},u_{1},u_{2},u_{3})$, the  likelihood equations have $3$ complex solutions. However, only  solutions with positive coordinates $p_{i}$
are statistically meaningful. A solution with all positive $p_i$ coordinates is said to be a positive solution.  {\em Real root classification} (RRC) and  {\em postive root classification} (PRC) are important problems:
\begin{center}
{\bf For which $u$, the polynomial system has $1, 2$ and $3$
real/positive solutions? }
\end{center}

According to the theory of computational (real) algebraic geometry \cite{BP2001, DV2005},  the number of (real/positive) solutions only changes when the
data $u_{i}$ goes across some ``special'' values (see Theorem \ref{theorem2}). The set of  ``special" $u_{i}$ is a (projective) {\em variety}  \cite[Lemma 4]{DV2005} in  ($3$ dimensional complex projective space)
$4$-dimensional complex space.

The number of real/positive solutions is uniform over each open connected component determined by the variety. In other words, the ``special'' $u_{i}$  plays the similar role as the {\em discriminant} for univariate  polynomials. 
 The first step of  RRC is calculating the ``special'' $u_i$, leading to the \textit{discriminant problem}:
\begin{center}
{\bf How to effectively compute the ``special'' $u_{i}$?} 
\end{center}

Geometrically, the ``special'' $u_{i}$ is a projection of a variety. So it can be computed by {\em elimination}.
For instance, with the command {\tt fgb\_gbasis\_elim} in {\tt FGb} \cite{fgb}, 
we  compute that the ``special" $u_{i}$ in Example \ref{linearmodel} form a hypersurface defined by a homogenous polynomial in $u_0, u_1, u_2, u_3$.
However, for many MLE problems, the elimination computation is too expensive as shown in Tables \ref{table1dense}--\ref{literatureOld} in Section \ref{experiment}. 
\end{example}%

We remark that our work can be viewed as following the numerous efforts in applying computational algebraic geometry to tackle MLE and critical points problems \cite{SAB2005, CHKS2006, BHR2007, HS2010,Uhler2012, GDP2012, FES2012, EJ2014, HS2014, HRS,Rod14}.  

The paper is organized as follows. Section \ref{definition} proves the discriminant variety of likelihood equation is projective and gives the definition of data-discriminant. Section \ref{algorithm} presents the elimination algorithm and probabilistic algorithm (with three strategies). Section \ref{experiment} shows the experimental results comparing the two algorithms and different strategies. Section \ref{rrc} discusses the future work and gives the real root classification of $3$ by $3$ symmetric matrix model. 

\section{Definitions and preliminaries}\label{definition}
In this section, we discuss how to define ``data-discriminant''.  We assume the reader is familiar with  elimination theory  \cite[Chapter 3]{CLO2007}.
 
\begin{notation}
Let ${\mathbb P}$ denote the  {\em projective closure} of the complex numbers ${\mathbb C}$.
For  homogeneous polynomials $g_1,\ldots,g_s$ in  ${\mathbb Q}[p_0, \ldots,p_n]$, 
${\mathcal V}(g_1,\ldots,g_s)$ denotes the {\em projective variety} in ${\mathbb P}^n$ defined by $g_1, \ldots, g_s$.
Let $\Delta_{n}$ denote the $n$-dimensional {\em probability simplex} 
$\{(p_0, \ldots, p_n)\in {\mathbb R}^{n+1}|p_0>0, \ldots,p_n>0, p_0+\cdots+p_n=1\}$. 
\end{notation}

\begin{definition}{\bf (Algebraic Statistical Model and Model Invariant)}\label{model}
The set $\cM$ is said to be an {\em algebraic statistical
model} if $\cM={\mathcal V}(g_1,\ldots,g_s)\cap \Delta_{n}$  where $g_1, \ldots, g_s$ 
define an irreducible generically reduced projective variety. 
Each $g_i$ is said to be a {\em model invariant} of~$\cM$. 
If the codimension of $ \cV(g_1,\ldots,g_s)$ is $k$, 
we say $\{h_1,\dots,h_k\}$ is a set of \textit{general model invariants} for $\cM$ whenever the variety $\cV(h_1,\dots,h_k)$ has $\cV(g_1,\ldots,g_s)$ as an irreducible component.
\end{definition}

For a given algebraic statistical model, there are several different ways to formulate the likelihood equations \cite{SAB2005}. In this section, we  introduce the Lagrange likelihood equations and  define the data-discriminant for this formulation. One can similarly define data-discriminants for other formulations of  likelihood~equations. 

\begin{notation}
For any $f_1, \ldots, f_m$ in the polynomial ring ${\mathbb Q}[x_1, ..., x_k]$, ${\mathcal V}_a(f_1, \ldots, f_m)$ denotes the {\em affine variety} in 
${\mathbb C}^k$ defined by $f_1, \ldots, f_m$ and 
$\langle f_1, \ldots, f_m \rangle$ denotes the  {\em ideal} generated~by $f_1, \ldots, f_m$.
For an ideal  $I$  in ${\mathbb Q}[x_1, \ldots, x_k]$,  ${\mathcal V}_a(I)$ denotes the {\em affine variety} defined by $I$. 

\end{notation}

\begin{definition}\label{LE}{\bf (Lagrange Likelihood Equations and Correspondence)}
Given an algebraic statistical model $X$ of codimension $k$ and a set of general model invariants $\{h_1,\dots,h_k\}$,
the system of polynomials below is said to be the {\em Lagrange likelihood
equations} of $X$ when set to zero:  
$$
\begin{array}{rl}
F_{0}:=&p_0(\lambda_1+\frac{\partial h_1}{\partial p_0}\lambda_2+\cdots+\frac{\partial h_k}{\partial p_0}\lambda_{k+1})-u_0\\
&\quad\vdots\\
F_{n}:=&p_n(\lambda_1+\frac{\partial h_1}{\partial p_n}\lambda_2+\cdots+\frac{\partial h_k}{\partial
p_n}\lambda_{k+1})-u_n\\
F_{n+1}:=&g_1(p_0, \ldots,p_n)\\
&\quad\vdots\\
F_{n+s} :=&g_s(p_0,\ldots,p_n)\\
F_{n+s+1}:=&p_0+\cdots+p_n-1
\end{array}$$
where 
$g_1,\ldots,g_s$ are the model invariants of $X$ and
$u_0, \ldots, u_n$, $p_0, \ldots,p_n$, $\lambda_1,\ldots,\lambda_{k+1}$ are indeterminates
(also denoted by ${\mathbf u}$, ${\mathbf p}$, ${\Lambda}$).
More specifically, $p_0,\ldots,p_n, \lambda_1,\ldots,\lambda_{k+1}$ are unknowns and $u_0,\ldots,u_n$ are parameters.

\noindent
${\mathcal V}_a(F_0, \ldots, F_{n+s+1})$, namely the set  
\[\{({\mathbf u}, {\mathbf p}, {\Lambda})\in {\mathbb C}^{n+1}\times {\mathbb C}^{n+1}\times {\mathbb C}^{k+1}|F_0=0, \ldots, F_{n+s+1}=0\},\]
 is said to be the {\em Lagrange likelihood correspondence} of $X$ and 
denoted by $\LX$. 
\end{definition}

\begin{notation}
Let $\pi$ denote the  {\em canonical projection} from the ambient space of the Lagrange likelihood correspondence  to the $\mathbb{C}^{n+1}$ associated to the ${\mathbf u}$ indeterminants $\pi$: 
 ${\mathbb C}^{n+1}\times {\mathbb
C}^{n+s+2}\rightarrow{\mathbb C}^{n+1}$.
For any set $S$ in  ${\mathbb C}^{n+1}$, ${\mathcal I}(S)$ denotes the ideal 
\[\{D\in {\mathbb Q}[{\mathbf u}]|D(a_0, \ldots, a_n)=0, \forall (a_0, \ldots, a_n)\in S\}.\]
$\overline{S}$ denotes the {\em affine closure} of $S$ in ${\mathbb C}^{n+1}$, namely ${\mathcal V}_a({\mathcal I}(S))$.

\end{notation}

Given an algebraic statistical model $\cM$ and a data vector ${\bf u}\in {\mathbb R}_{>0}^n$,  the {\em maximum likelihood estimation} problem is to
{\bf maximize the} {\em likelihood function}
{\bf $p_0^{u_0}\cdots p_n^{u_n}$ subject to $\cM$.}
The MLE problem can be solved by computing $\pi^{-1}({\mathbf u})\cap \LX$.
 More specifically, if ${\mathbf p}$ is a regular point of ${\mathcal V}(g_1,\ldots,g_s)$, then ${\mathbf p}$ is a critical point of the likelihood function if and only if there exists $\Lambda\in {\mathbb C}^{k+1}$ such that $({\mathbf u}, {\mathbf p}, {\Lambda})\in \LX$.  Theorem 1
 states that for  a general data vector ${\mathbf u}$,  $\pi^{-1}({\mathbf u})\cap \LX$
 is a finite set and the cardinality of $\pi^{-1}({\mathbf u})\cap \LX$ is constant over a dense Zariski open set, which inspires the definition of ML degree.  
 For details, see \cite{SAB2005}.

\begin{theorem} \label{MLD}\cite{SAB2005}
For an algebraic statistical model $X$, 
 there exists an affine variety $V\subset {\mathbb C}^{n+1}$ and a non-negative integer $N$  such that for any ${\mathbf u}\in {\mathbb
C}^{n+1}\backslash V$,  
\[\#\pi^{-1}({\mathbf u})\cap \LX = N.\] 
 \end{theorem}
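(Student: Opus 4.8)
The plan is to reduce the statement to two standard facts of elimination theory: that the dimension of the fibres of a dominant morphism is constant on a dense open set \cite[Ch.~3]{CLO2007}, and that the fibre-cardinality function of a morphism is constructible, hence constant on a dense open set. The only genuinely model-specific input needed is that the generic fibre of $\pi$ restricted to $\LX$ is \emph{finite}; everything else is bookkeeping. Concretely, first I would pass to the Zariski closure $\overline{\LX}\subseteq\C^{n+1}\times\C^{n+s+2}$ and decompose it into irreducible components $Z_1,\dots,Z_r$. For each $Z_i$, either $\overline{\pi(Z_i)}$ is a proper subvariety of $\C^{n+1}$, or $\pi|_{Z_i}$ is dominant. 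Collecting the non-dominant components gives a proper closed set $V_0:=\bigcup_{i\ \mathrm{non\text{-}dom}}\overline{\pi(Z_i)}\subsetneq\C^{n+1}$; for $\mathbf u\notin V_0$ every point of $\pi^{-1}(\mathbf u)\cap\LX$ lies on a component that dominates the $\mathbf u$-space.

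Next I would show that each dominant component $Z_i$ has dimension exactly $n+1$, which by the dimension-of-fibres theorem yields a dense open $U_i\subseteq\C^{n+1}$ over which $\pi^{-1}(\mathbf u)\cap Z_i$ is finite. The dimension count exploits the special shape of the Lagrange likelihood equations: on the open locus of $\LX$ where $p_0\cdots p_n\neq 0$ and $\mathbf p$ is a smooth point of $\cV(g_1,\dots,g_s)$, the equations $F_0=\cdots=F_n=0$ form a linear system in $\Lambda=(\lambda_1,\dots,\lambda_{k+1})$ whose coefficient matrix $M(\mathbf p)$ has columns $(1,\dots,1)^{\top}$ and the gradients of the general model invariants $h_1,\dots,h_k$. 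Since $\cV(g_1,\dots,g_s)$ is irreducible, generically reduced, and of codimension $k$, the matrix $M(\mathbf p)$ has full column rank $k+1$ on a dense open subset of the model, so $\Lambda$ is uniquely determined by $(\mathbf u,\mathbf p)$ there, while solvability of the linear system imposes the expected number of further independent conditions on $\mathbf p$; together with $g_1=\cdots=g_s=0$ and $p_0+\cdots+p_n-1=0$ this cuts $\mathbf p$ down to a finite set for generic $\mathbf u$. This is exactly the mechanism underlying the ML degree, and the finiteness assertion is precisely what is proved in \cite{SAB2005, CHKS2006}, whose results may be invoked directly. The degenerate strata where some $p_i=0$, where $\mathbf p$ is a singular point of the model, or where $M(\mathbf p)$ drops rank, project into a proper closed subset of $\C^{n+1}$, which I would simply absorb into the exceptional set.

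Finally, set $U:=\bigl(\bigcap_i U_i\bigr)\setminus V_0$, a dense Zariski-open subset of $\C^{n+1}$ over which $\pi^{-1}(\mathbf u)\cap\LX$ is the disjoint union of the finite sets $\pi^{-1}(\mathbf u)\cap Z_i$ and hence finite. By Chevalley's theorem the map $\mathbf u\mapsto\#\bigl(\pi^{-1}(\mathbf u)\cap\LX\bigr)$ is constructible on $U$, so it is constant, equal to some non-negative integer $N$, on a dense open $U'\subseteq U$; alternatively one invokes generic freeness of the coordinate ring of $\LX$ as a module over a suitable localization of $\Q[\mathbf u]$ together with generic reducedness of the fibres (automatic in characteristic zero). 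Taking $V:=\C^{n+1}\setminus U'$ then gives the theorem. The step I expect to be the main obstacle is the middle one: ruling out positive-dimensional components of $\LX$ that dominate the $\mathbf u$-space, which is where the hypotheses of Definition~\ref{model} (irreducibility and generic reducedness of the model, codimension $k$) are essential and where the vanishing-coordinate and Jacobian-degeneracy loci must be handled with care.
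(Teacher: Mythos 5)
The paper does not prove this statement at all: Theorem \ref{MLD} is imported verbatim from \cite{SAB2005}, and the surrounding text simply says ``for details, see \cite{SAB2005}.'' So there is no in-paper argument to compare yours against; what you have written is a reconstruction of the standard proof. On its own terms your sketch is sound --- decompose into irreducible components, discard the non-dominant ones into the exceptional set, establish generic finiteness of fibres on the dominant ones, and conclude constancy of the fibre cardinality via constructibility/generic flatness --- but note that it is partly circular as a ``proof'': the one genuinely nontrivial step (generic finiteness) is exactly the content you then delegate back to \cite{SAB2005, CHKS2006}. The step you flag as the main obstacle can in fact be dispatched cleanly without any rank analysis of $M(\mathbf p)$: the projection $(\mathbf u,\mathbf p,\Lambda)\mapsto(\mathbf p,\Lambda)$ restricts to an isomorphism from $\LX$ onto $\bigl(\cV_a(g_1,\dots,g_s)\cap\cV_a(p_0+\cdots+p_n-1)\bigr)\times\C^{k+1}$, because each $F_i=0$ solves uniquely for $u_i$ in terms of $(\mathbf p,\Lambda)$. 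Since Definition \ref{model} makes $\cV(g_1,\dots,g_s)$ irreducible of codimension $k$, its affine cone sliced by the hyperplane has pure dimension $n-k$, so $\LX$ has pure dimension $n+1$; every irreducible component therefore either fails to dominate $\C^{n+1}$ (absorbed into $V_0$) or has generically finite fibres by the fibre-dimension theorem. With that substitution your argument is complete and self-contained, and the degenerate strata ($p_i=0$, singular points, rank drop of $M(\mathbf p)$) need no separate treatment for this theorem --- they only matter later, for $\LX_p$ and $\LX_J$.
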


\begin{definition}\cite{SAB2005}{\bf (ML Degree)}
For an algebraic statistical model $X$, the non-negative integer $N$ stated in Theorem \ref{MLD}
is said to be the {\em ML degree} of $X$. 
\end{definition}


\begin{definition}\label{nddv}
For an algebraic statistical model $X$ with a set of general model invariants $\{h_1,\dots,h_k\}$, suppose $F_0, \ldots, F_{n+s+1}$ are defined as  in Definition \ref{LE}. 
\noindent
Then, we have the following:

\textbullet~$\LX_{\infty}$ denotes the {\em set of non-properness} of $\pi$, i.e., the set of the $u\in \overline{\pi(\LX)}$ such that there does not exist 
a   compact neighborhood $U$ of $u$ where
$\pi^{-1}(U)\cap \LX$ is  compact;

 \textbullet~$\LX_{p}$ denotes $\overline{\pi(\LX\cap {\mathcal V}_a(\Pi_{k=0}^np_k))}$; and
 
\textbullet~$\LX_{J}$ denotes
$\overline{\pi(\LX\cap {\mathcal V}_a(J))}$ where 
$J$ denotes 
 the determinant below
 {\footnotesize\[\det \left[
\begin{matrix}
\frac{\partial F_0}{\partial p_0} & \cdots & \frac{\partial
F_0}{\partial
p_n} & \frac{\partial F_{0}}{\partial \lambda_1} & \cdots & \frac{\partial F_{0}}{\partial \lambda_{k+1}}\\
\vdots & \ddots & \vdots &\vdots & \ddots & \vdots \\
\frac{\partial F_{n}}{\partial p_0} & \cdots & \frac{\partial
F_{n}}{\partial
p_n}& \frac{\partial F_{n}}{\partial \lambda_1} & \cdots & \frac{\partial F_{n}}{\partial \lambda_{k+1}}
\\
1 & \cdots &1& 0& \cdots &0\\
\frac{\partial h_1}{\partial p_0} & \cdots & \frac{\partial
h_1}{\partial
p_n} & \frac{\partial h_{1}}{\partial \lambda_1} & \cdots & \frac{\partial h_{1}}{\partial \lambda_{k+1}}\\
\vdots & \ddots & \vdots &\vdots & \ddots & \vdots \\
\frac{\partial h_{k}}{\partial p_0} & \cdots & \frac{\partial
h_{k}}{\partial
p_n}& \frac{\partial h_{k}}{\partial \lambda_1} & \cdots & \frac{\partial h_{k}}{\partial \lambda_{k+1}}
\end{matrix}
\right]_{(n+k+2)\times (n+k+2).}
\]}
\end{definition}

The geometric meaning of $\LX_{p}$ and $\LX_{J}$ are as follows. 
The first, $\LX_{p}$, is the projection of the intersection of the Lagrange likelihood correspondence with the  coordinate hyperplanes. 
The second, $\LX_{J}$, is the projection of the intersection of the Lagrange likelihood correspondence with the hypersurface defined by $J$. Geometrically, $\LX_{J}$ is the closure of the union of the projection of the singular
locus of $\LX$ and the set of
critical values of the restriction of $\pi$ to the regular locus of $
\LX$ \cite[Definition 2]{DV2005}.

The Lagrange likelihood equations define an affine variety.  As we continuously deform the parameters $u_i$,  coordinates of a solution can tend to infinity. 
Geometrically,   $\LX_{\infty}$ is the set of the data ${\mathbf u}$ such that the Lagrange likelihood equations have some solution $({\mathbf p}, \Lambda)$ at infinity; 
this is the closure of the set of ``non-properness''  as defined in the page 1, \cite{Jelonek1999} and page 3, \cite{DS2004}.
  It is known that the set of non-properness of $\pi$ is closed and can be computed by Gr\"obner bases (see Lemma 2 and Theorem 2 in \cite{DV2005}).  

The ML degree captures the geometry of the likelihood equations  over the complex numbers. 
However, statistically meaningful solutions  occur over real numbers. 
Below, Theorem \ref{theorem2}  states that  $\LX_{\infty},$  $\LX_{J}$ and 
$\LX_{p}$ define open connected components such that the number of real/positive solutions is uniform over each open connected component. 
 Theorem 2 is a corollary of \textit{Ehresmann's theorem} for which there exists semi-algebraic statements since 1992 \cite{CS1992}.

\begin{theorem}\label{theorem2}
For an algebraic statistical model~$X$. 
If ${\mathcal C}$ is an open connected component~of 
\[{\mathbb R}^{n+1}\backslash (\LX_{\infty}\cup \LX_{J}),\]
then over  ${\mathbf u}\in
{\mathcal C}$, 
the following is constant: 
\[\#\pi^{-1}({\bf u})\cap \LX\cap {\mathbb
R}^{n+s+2}.\]
Moreover,
if ${\mathcal C}$ is an open connected component~of 
 \[{\mathbb R}^{n+1}\backslash (\LX_{\infty}\cup \LX_{J}\cup {\LX}_{p}),\] 
then over  ${\mathbf u}\in
{\mathcal C}$, 
the following is constant: 
\[\#\pi^{-1}({\mathbf u})\cap \LX\cap ({\mathbb R}_{>0}^{n+1}\times {\mathbb R}^{s+1}).\]
\end{theorem}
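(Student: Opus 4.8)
The plan is to deduce Theorem~\ref{theorem2} from Ehresmann's fibration theorem applied to the restriction of $\pi$ to $\LX$, in the spirit of \cite{CS1992, DV2005}. The key point is that away from the ``bad locus'' $\LX_\infty \cup \LX_J$, the map $\pi$ is a proper submersion onto each open connected component $\mathcal{C}$, hence a locally trivial fiber bundle, and the fiber cardinality over $\mathbb{R}$ is constant on $\mathcal{C}$ because $\mathcal{C}$ is connected. First I would pass to the real points: set $\LX^{\mathbb{R}} := \LX \cap \mathbb{R}^{n+s+2+n+1}$ and restrict $\pi$ to a map $\pi_{\mathbb{R}} \colon \LX^{\mathbb{R}} \to \mathbb{R}^{n+1}$. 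Over $\mathcal{C}$, I claim the preimage $\pi_{\mathbb{R}}^{-1}(\mathcal{C})$ is a smooth manifold of dimension $n+1$ on which $\pi_{\mathbb{R}}$ is a local diffeomorphism: the complement of $\LX_J$ guarantees that at every point of $\pi^{-1}(\mathcal{C}) \cap \LX$ the Jacobian $J$ (which is exactly the Jacobian of $(F_0,\dots,F_{n+s+1})$ with respect to the unknowns $\mathbf{p}, \Lambda$, as recorded in Definition~\ref{nddv}) is nonzero, so by the implicit function theorem $\LX$ is smooth there of the expected dimension and $\pi$ is a submersion with $0$-dimensional fibers, i.e.\ a local diffeomorphism.

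Next I would establish properness of $\pi_{\mathbb{R}}$ over $\mathcal{C}$. This is precisely where $\LX_\infty$ enters: by definition $\LX_\infty$ is the set of non-properness of $\pi$, so for $\mathbf{u} \in \mathbb{R}^{n+1} \setminus \LX_\infty$ there is a compact neighborhood $U$ of $\mathbf{u}$ with $\pi^{-1}(U) \cap \LX$ compact; intersecting with $\mathbb{R}^{n+s+2+n+1}$ (a closed set) keeps compactness, so $\pi_{\mathbb{R}}$ restricted over $U$ is proper. A proper local diffeomorphism onto a connected base is a covering map with finite fibers; combined with connectedness of $\mathcal{C}$ this forces $\#\pi_{\mathbb{R}}^{-1}(\mathbf{u})$ to be constant for $\mathbf{u} \in \mathcal{C}$, which is the first assertion since $\#\pi^{-1}(\mathbf{u}) \cap \LX \cap \mathbb{R}^{n+s+2}$ is just $\#\pi_{\mathbb{R}}^{-1}(\mathbf{u})$ up to the bookkeeping of which coordinates are labeled unknowns. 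Alternatively, and perhaps more cleanly, I would invoke Ehresmann's theorem directly: a proper submersion is a locally trivial bundle, so all fibers over a connected base are diffeomorphic, hence have the same (finite) cardinality.

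For the second, ``positive,'' assertion I would repeat the argument but on the open subset $\LX^{>0} := \LX \cap (\mathbb{R}_{>0}^{n+1} \times \mathbb{R}^{s+1})$ of the $(\mathbf{p},\Lambda)$-coordinates (together with real $\mathbf{u}$). Removing $\LX_p = \overline{\pi(\LX \cap \mathcal{V}_a(\prod_k p_k))}$ from the base guarantees that over $\mathcal{C} \subset \mathbb{R}^{n+1}\setminus(\LX_\infty \cup \LX_J \cup \LX_p)$ no solution has a vanishing $p$-coordinate, so the locus $\{p_k > 0 \ \forall k\}$ is both open and closed in the fiber $\pi_{\mathbb{R}}^{-1}(\mathbf{u})$; since the total real fiber count is already locally constant on $\mathcal{C}$ and the covering $\pi_{\mathbb{R}}^{-1}(\mathcal{C}) \to \mathcal{C}$ restricts to a sub-covering on the clopen ``all-positive'' part, that part also has constant cardinality over $\mathcal{C}$. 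The main obstacle, I expect, is not any single step but making the identification in Definition~\ref{nddv} airtight: one must check that the determinant $J$ really is the full Jacobian of the unknowns' equations (the rows coming from $F_{n+1},\dots,F_{n+s}$ are handled via the general model invariants $h_1,\dots,h_k$, so one is implicitly using that $\mathcal{V}(h_1,\dots,h_k)$ has $\mathcal{V}(g_1,\dots,g_s)$ as a component and that a regular point of the model is being used), and that the smooth/proper picture over $\mathcal{C}$ is genuinely a consequence of excising these three closed sets; this is the content of \cite[Definition~2]{DV2005} and the semi-algebraic Ehresmann statement of \cite{CS1992}, which I would cite rather than reprove.
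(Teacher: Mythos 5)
Your proposal is correct and follows exactly the route the paper itself takes: the paper offers no written proof of Theorem \ref{theorem2}, instead citing it as a corollary of Ehresmann's theorem in its semi-algebraic form \cite{CS1992} together with the discriminant-variety framework of \cite{DV2005}, which is precisely the proper-local-diffeomorphism/covering argument you spell out (including the clopen ``all-positive'' sub-covering for the second assertion). The one caveat you flag yourself --- that $J$ is the Jacobian of the square subsystem built from the general model invariants $h_1,\dots,h_k$ rather than all of $g_1,\dots,g_s$, so one must work at regular points of the model where $\cV(h_1,\dots,h_k)$ and $\cV(g_1,\dots,g_s)$ locally agree --- is indeed the only point needing care, and it is handled by the definitions in \cite{DV2005}.
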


Before we give the definition of data-discriminant, we study the structures of  $\LX_{p}$, $\LX_J$ and $\LX_\infty$. 
Proposition \ref{sp} shows that the structure of the likelihood equations forces $\LX_p$ to be contained in the  union of coordinate hyperplanes defined by $\prod_{k=0}^n u_k$.
Proposition \ref{sj} shows that the structure of the likelihood equations forces $\LX_J\backslash \{{\bf 0}\}$ to be a projective variety. 
Similar to the proof of Proposition \ref{sj}, we can also show that the structure of the likelihood equations forces $\LX_\infty\backslash \{{\bf 0}\}$ to be a projective  variety.

\begin{proposition}\label{sp}
For any algebraic statistical model $X$, 
\[{\LX}_{p}\subset {\mathcal V}_a(\Pi_{k=0}^nu_k).\]
\end{proposition}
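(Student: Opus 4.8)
The plan is to unwind the definitions and observe that the defining equations $F_0,\dots,F_n$ already encode the statement, so that essentially no work is needed beyond a closure argument. First I would take an arbitrary point $(\mathbf u,\mathbf p,\Lambda)\in \LX\cap {\mathcal V}_a(\prod_{k=0}^n p_k)$; by definition of the coordinate-hyperplane locus there is an index $j\in\{0,\dots,n\}$ with $p_j=0$. Since the point lies on $\LX$ we have $F_j=0$, and because
\[
F_j=p_j\Bigl(\lambda_1+\tfrac{\partial h_1}{\partial p_j}\lambda_2+\cdots+\tfrac{\partial h_k}{\partial p_j}\lambda_{k+1}\Bigr)-u_j ,
\]
the vanishing of $p_j$ forces $F_j=-u_j$, hence $u_j=0$. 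Thus the image point satisfies $\mathbf u\in {\mathcal V}_a(u_j)\subset {\mathcal V}_a(\prod_{k=0}^n u_k)$.

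This shows $\pi\bigl(\LX\cap {\mathcal V}_a(\prod_{k=0}^n p_k)\bigr)\subset {\mathcal V}_a(\prod_{k=0}^n u_k)$ as subsets of ${\mathbb C}^{n+1}$. To conclude I would invoke the fact that ${\mathcal V}_a(\prod_{k=0}^n u_k)$ is Zariski closed: the affine closure of a set contained in a closed set stays inside that closed set, so
\[
{\LX}_p=\overline{\pi\bigl(\LX\cap {\mathcal V}_a(\textstyle\prod_{k=0}^n p_k)\bigr)}\subset {\mathcal V}_a(\textstyle\prod_{k=0}^n u_k),
\]
which is exactly the assertion of the proposition.

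I do not expect a genuine obstacle here: the statement is a direct consequence of the special shape of the Lagrange likelihood equations — the parameter $u_j$ enters $F_j$ only through the term $-u_j$, while the remainder of $F_j$ is divisible by $p_j$ — and no information about the other generators $F_{n+1},\dots,F_{n+s+1}$, nor any genericity or dimension input, is used. The only step that warrants a moment's care is the passage to closures, which is legitimate precisely because the target ${\mathcal V}_a(\prod_{k=0}^n u_k)$ is closed.
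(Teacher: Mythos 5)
Your proof is correct, and it rests on the same structural observation as the paper's: $u_j$ enters $F_j$ only through the term $-u_j$, while the rest of $F_j$ is divisible by $p_j$. The difference is purely in the packaging. The paper argues ideal-theoretically: it notes that $u_k = p_k(\cdots) - F_k$ lies in the elimination ideal $\langle F_0,\dots,F_{n+s+1},p_k\rangle\cap{\mathbb C}[{\mathbf u}]$, invokes the Closure Theorem to identify ${\mathcal V}_a$ of that elimination ideal with $\overline{\pi(\LX\cap{\mathcal V}_a(p_k))}$, and then decomposes ${\mathcal V}_a(\prod_k p_k)$ as the union of the coordinate hyperplanes, using that closure commutes with finite unions. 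You instead argue pointwise and then pass to closures, which needs only the triviality that the closure of a subset of a closed set stays inside it; you also skip the union decomposition, since choosing an index $j$ with $p_j=0$ handles it directly. Your route is slightly more elementary (no Closure Theorem needed for a containment in one direction), while the paper's phrasing sets up the elimination-ideal machinery it reuses elsewhere. Both are complete proofs.
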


{\bf Proof.}
By  Definition \ref{LE},
 for any $k$ $(0\leq k\leq n)$, 
\[u_k = p_k (\lambda_1+\frac{\partial g_1}{\partial p_1}\lambda_2+\cdots+\frac{\partial g_s}{\partial p_1}\lambda_{s+1}) - F_k.\]
Hence,  
\[u_k \in \langle F_k, p_k\rangle\cap {\mathbb Q}[u_k]\subset \langle F_0, \ldots, F_{n+s+1}, p_k\rangle\cap {\mathbb C}[{\mathbf u}]\]
So 
\[{\mathcal V}_a(\langle F_0, \ldots, F_{n+s+1}, p_k\rangle\cap {\mathbb C}[{\mathbf u}])\subset {\mathcal V}_a(u_k)\]
By the Closure Theorem \cite{CLO2007}, 
\[{\mathcal V}_a(\langle F_0, \ldots, F_{n+s+1}, p_k\rangle\cap {\mathbb C}[{\mathbf u}])=\overline{\pi({\LX}\cap {\mathcal V}_a(p_k))}\]
Therefore, 
\begin{align*}
{\LX}_{p}&=
\overline{\pi({\LX}\cap {\mathcal V}_a(\Pi_{k=0}^np_k))}\\
&=\overline{\pi({\LX}\cap \cup_{k=0}^n{\mathcal V}_a(p_k))}\\
&=\cup_{k=0}^n\overline{\pi({\LX}\cap {\mathcal V}_a(p_k))}\\
&\subset \cup_{k=0}^n{\mathcal V}_a(u_k)\\
&={\mathcal V}_a(\Pi_{k=0}^nu_k). \Box
\end{align*}

\begin{remark}
Generally, 
${\LX}_{p}\neq {\mathcal V}_a(\Pi_{k=0}^nu_k)$. For example, suppose the algebraic statistical model is ${\mathcal V}_a(p_0-p_1)\cap \Delta_1$.
Then  ${\LX}_{p}=\emptyset\neq {\mathcal V}_a(u_0u_1)$.
\end{remark}

\begin{notation}
$\DD_p$ denotes the product $\Pi_{k=0}^nu_k$.
\end{notation}

\begin{proposition}\label{sj}
For an algebraic statistical model $X$, we have
${\LX}_J\backslash \{{\bf 0}\}$ is a projective variety in ${\mathbb P}^n$, where ${\bf 0}$ is the zero vector $(0, \ldots, 0)$ in ${\mathbb C}^{n+1}$.
\end{proposition}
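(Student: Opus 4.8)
The plan is to show that $\LX_J$ is invariant under scaling in the $\mathbf u$-coordinates, which forces $\LX_J \setminus \{\mathbf 0\}$ to be a cone minus its apex, hence a projective variety in $\P^n$. The key observation is homogeneity: the Lagrange likelihood equations are built so that the $u_i$ appear only in the $F_0, \dots, F_n$ blocks, and each $u_i$ appears linearly there and nowhere in $J$. First I would examine the group action $t \cdot (\mathbf u, \mathbf p, \Lambda) := (t\mathbf u, \mathbf p, t\Lambda)$ for $t \in \mathbb{C}^*$, or more precisely an action that rescales $\mathbf u$, $\Lambda$ (the Lagrange multipliers $\lambda_i$) together while fixing $\mathbf p$, since in $F_j = p_j(\lambda_1 + \sum \frac{\partial h_i}{\partial p_j}\lambda_{i+1}) - u_j$ the terms $p_j\lambda_i$ and $u_j$ scale the same way. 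Under such scaling $F_0, \dots, F_n$ are homogeneous of degree one, while $F_{n+1}, \dots, F_{n+s}$ (the model invariants $g_i$, involving only $\mathbf p$) and $F_{n+s+1}$ are fixed. So $\LX$ is \emph{not} itself invariant (because of the affine slice $\sum p_k = 1$ and the $g_i$), but the issue is only the $\mathbf p$-part; the point is that we are projecting away the $\mathbf p, \Lambda$ coordinates.

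The cleaner route is to work directly with ideals and elimination. Let $I = \langle F_0, \dots, F_{n+s+1}, J\rangle \subset \mathbb{Q}[\mathbf u, \mathbf p, \Lambda]$, so that by the Closure Theorem $\LX_J = \mathcal V_a(I \cap \mathbb{Q}[\mathbf u])$. I would show $I \cap \mathbb{Q}[\mathbf u]$ is a homogeneous ideal in $\mathbb{Q}[\mathbf u]$. Assign weights: $\deg u_i = 1$, $\deg \lambda_i = 1$, $\deg p_i = 0$. Then $F_0, \dots, F_n$ are weight-homogeneous of degree $1$; the polynomials $g_i = F_{n+1}, \dots, F_{n+s}$ and $F_{n+s+1} = \sum p_k - 1$ have weight $0$ (being polynomials in the $p$'s and constants only); and $J$, being the determinant of a matrix whose entries are partial derivatives of the $F_j$'s with respect to $p$'s and $\lambda$'s, has entries of weight $1$ in the $F_0, \dots, F_n$ rows (derivative of a weight-$1$ form with respect to a weight-$0$ variable $p$, or with respect to a weight-$1$ variable $\lambda$ giving weight $0$ — here I'd need to check the mixed case carefully) and weight $0$ in the $h_i$ rows. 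The key computational step is verifying $J$ is itself weight-homogeneous under this grading; once that is done, $I$ is generated by weight-homogeneous elements, hence $I \cap \mathbb{Q}[\mathbf u]$ is generated by elements homogeneous in the $u$'s alone (weight restricted to $\mathbb{Q}[\mathbf u]$ is the ordinary degree), so $\mathcal V_a(I \cap \mathbb{Q}[\mathbf u])$ is a cone, and removing $\{\mathbf 0\}$ gives a projective variety in $\P^n$.

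The main obstacle I anticipate is the weight-homogeneity of $J$: the matrix mixes $\partial/\partial p_j$ columns (which raise weight by $0$) with $\partial/\partial \lambda_j$ columns (which lower weight by $1$), and also mixes the $F_j$ rows (weight $1$) with the $h_i$ rows (weight $0$) and the all-ones row (weight $0$). One must check that every term in the Leibniz expansion of the determinant has the same total weight; a priori different permutations could pick up different combinations of $\lambda$-columns and $F$-rows. I expect this works out because of the block structure — the $h_i$ rows and the ones-row have zero entries in the $\lambda$-columns, forcing those columns to be matched against $F$-rows in any nonzero term, which pins down the weight contribution. Verifying this bookkeeping (a finite combinatorial check on which row/column blocks can pair up) is where the real content lies, and it is essentially the same argument sketched for $\LX_\infty \setminus \{\mathbf 0\}$. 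Finally I would handle the degenerate possibility that $\LX_J = \{\mathbf 0\}$ or $\emptyset$, in which case $\LX_J \setminus \{\mathbf 0\}$ is empty and the statement holds trivially.
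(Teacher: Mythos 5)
Your proposal is correct and follows essentially the same route as the paper: the paper's proof rests on the (unproved) facts F1--F2 that $\pi(\LX\cap{\mathcal V}_a(J))$ is stable under $\mathbf u\mapsto\alpha\mathbf u$ and that such a set has a homogeneous vanishing ideal, which is precisely the scaling/weight-homogeneity argument you carry out in detail, including the one genuinely non-obvious check --- that $J$ is weight-homogeneous because the constraint rows have zero entries in the $\lambda$-columns. One minor slip: under the action $(t\mathbf u,\mathbf p,t\Lambda)$ the correspondence $\LX$ \emph{is} invariant (the equations $g_i=0$ and $\sum_k p_k=1$ involve only the fixed coordinates $\mathbf p$), but this does not affect the ideal-theoretic argument you actually run.
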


{\bf Proof.}
By the formulation of the Lagrange likelihood equations, 
we can prove that ${\mathcal I}(\pi(\LX\cap {\mathcal V}_a(J))$ is a homogeneous ideal by the two basic facts below, which can be proved by  Definition \ref{LE} and basic algebraic geometry arguments.  

{\bf F1.} For every ${\mathbf u}$ in  $\pi(\LX\cap {\mathcal V}_a(J))$,  each  scalar multiple  
$\alpha{\mathbf u}$ is also in  $\pi(\LX\cap {\mathcal V}_a(J))$. 

{\bf F2.} For any $S\subset {\mathbb C}^{n+1}$, if for any ${\mathbf u}\in S$ and for any scalar $\alpha \in {\mathbb C}$,  $\alpha{\mathbf u}\in S$, then ${\mathcal I}(S)$ is a homogeneous ideal in~${\mathbb Q}[{\mathbf u}]$. 

That means the ideal ${\mathcal I}(\pi(\LX\cap {\mathcal V}_a(J))$ is generated by finitely many homogeneous polynomials $D_1$, $\ldots$, $D_m$. Therefore, $\LX_J={\mathcal V}_a({\mathcal I}(\pi(\LX\cap {\mathcal V}_a(J)))={\mathcal V}_a(D_1, \ldots, D_m)$. So $\LX_J\backslash \{{\bf 0}\}={\mathcal V}(D_1, \ldots, D_m)\subset {\mathbb P}^n$. $\Box$

\begin{notation}\label{ddvj}
For an algebraic statistical model $X$, 
we define the notation $\DD_J$ according to the codimension of ${\LX}_J\backslash \{\bf{0}\}$ in ${\mathbb P}^n$.

\textbullet~If the codimension is $1$, then assume
${\mathcal V}(D_1), \ldots,  {\mathcal V}(D_K)$ are the codimension $1$ irreducible components in the minimal irreducible decomposition of ${\LX}_J\backslash \{\bf{0}\}$ in ${\mathbb P}^n$ and $\langle D_1 \rangle$, $\ldots$, $\langle D_K \rangle$ are radical. 
$\DD_J$ denotes the homogeneous polynomial $\Pi_{j=1}^KD_j$.

\textbullet~If the codimension is greater than $1$, then our  convention is to take  $\DD_J =1$. 
\end{notation}

Similarly, we  use the notation $\DD_{\infty}$ to denote the projective variety $\LX_{\infty}\backslash \{\bf{0}\}$.
 Now we define the ``data-discriminant'' of Lagrange likelihood equations. 

\begin{definition}\label{dd}{\bf (Data-Discriminant)}
For a given algebraic statistics model $X$, the homogeneous polynomial $\DD_{\infty}\cdot\DD_{J}\cdot\DD_p$ is said to be the {\em data-discriminant} (DD) of Lagrange likelihood equations of $X$ and denoted by $\DD$.
\end{definition}

\begin{remark}
Note that DD can be viewed as a generalization of the ``discriminant'' for univariate  polynomials. So it is interesting to compare DD with border polynomial (BP) \cite{BP2001} and discriminant variety (DV) \cite{DV2005}.    DV and BP are defined for general parametric polynomial systems. 
DD is defined  for  the likelihood equations but can be  generalized to generic zero-dimensional  systems. 
Generally, for any square and generic zero-dimensional  system, ${\mathcal V}_a($DD$) \subset$ DV $\subset$ ${\mathcal V}_a($BP$)$.   Note that due to the special structure of likelihood equations, DD is a homogenous polynomial despite being an affine system of equations. However, generally, DV is not a projective variety and BP is not homogenous. 
\end{remark}
\begin{example}[Linear Model]\label{linearmodel}
The algebraic statistic model for the four sided die story in  Section \ref{introduction} is given by 
\[\cM={\mathcal V}(p_0+2p_1+3p_2-4p_3)\cap \Delta_{3}.\] 
The Langrange likelihood equations are the $F_0=0, \ldots, F_5=0$ shown in  Example \ref{ex:introduction}. 
The Langrange likelihood correspondence is $\LX={\mathcal V}_a(F_0, \ldots, F_5)\subset {\mathbb C}^{10}$. If we choose generic  $(u_0, u_1, u_2, u_3)\in {\mathbb C}^4$, $\pi^{-1}(u_0, u_1, u_2, u_3)\cap \LX=3$, namely the ML degree is $3$. The data-discriminant is the product of $\DD_{\infty}$, $\DD_{p}$ and $\DD_{J}$, where

$\DD_{\infty}=u_0+u_1+u_2+u_3$,
$\DD_{p}=u_0u_1u_2u_3$, and 

 $\DD_{J}=$
  {\scriptsize$441u_0^4+4998u_0^3u_1+20041u_0^2u_1^2+33320u_0u_1^3+19600u_1^4-756u_0^3u_2+
20034u_0^2u_1u_2+83370u_0u_1^2u_2+79800u_1^3u_2-5346u_0^2u_2^2+55890u_0u_1u_2^2+119025u_1^2u_2^2+4860u_0u_2^3+76950u_1u_2^3+18225u_2^4-1596u_0^3u_3-11116u_0^2u_1u_3-17808u_0u_1^2u_3+4480u_1^3u_3+7452u_0^2u_2u_3-7752u_0u_1u_2u_3+49680u_1^2u_2u_3-17172u_0u_2^2u_3+71460u_1u_2^2u_3+27540u_2^3$$u_3+2116u_0^2u_3^2+6624u_0u_1u_3^2-4224u_1^2u_3^2-9528u_0u_2u_3^2+15264u_1u_2u_3^2$\\
$+14724u_2^2u_3^2-1216u_0u_3^3-512u_1u_3^3+3264u_2u_3^3+256u_3^4$}.

By applying the well known partial cylindrical algebraic decomposition (PCAD) \cite{CH1998} method to the data-discriminant above, we get  that for any $(u_0, u_1, u_2, u_3)\in {\mathbb R}_{>0}^4$, 

\textbullet ~if $\DD_{J}(u_0, u_1, u_2, u_3)>0$, then the system of likelihood equations has $3$ distinct real solutions and $1$ of them is positive; 

\textbullet ~if $\DD_{J}(u_0, u_1, u_2, u_3)<0$, then the system of likelihood equations has exactly $1$ real solution and it is positive. 

The answer above can be verified by the {\tt RealRootClassification} \cite{BP2001, CDMMX2010} command in {\tt Maple 2015}.  In this example, the $\DD_{\infty}$ does not effect the number of real/positive solutions since it is always positive when each $u_i$ is positive. However, generally, $\DD_{\infty}$  plays an important role in real root classification. Also remark that the real root classification is equivalent to the positive root classification for this example but it is not true generally (see the example discussed in Section \ref{rrc}). 
\end{example}

\section{Algorithm}\label{algorithm}
In this section, we discuss how to compute the discriminant $\DD$. 
We assume that $X$ is the closure of a given statistical model,  $F_0, \ldots, F_{n+s+1}$ are defined as in Definition \ref{LE},
 and $J$ is defined as in  Definition~\ref{nddv}. We rename $F_0, \ldots, F_{n+s+1}$ as $F_0, \ldots, F_m$. We also rename $p_0, \ldots, p_n, \lambda_1, \ldots, \lambda_s$ as $y_0, \ldots, y_m$.
Subsection \ref{standardAlgSection} presents the standard algorithm for reference and Subsection \ref{mainResults} presents the probabilistic  algorithm. 


\subsection{Standard Algorithm}\label{standardAlgSection}
Considering the  data-discriminant as a projection drives a natural algorithm to compute it. This is the  standard elimination algorithm in symbolic computation:

\textbullet~we compute the $\LX_J$ by {\em elimination} and then get $\DD_J$ by the {\em radical equidimensional decomposition} \cite[Definition 3]{DV2005}. The algorithm is formally described in the
Algorithm \ref{dxj};

\textbullet~we compute  $\LX_\infty$ by the Algorithm {\tt PROPERNESSDEFECTS} presented in \cite{DV2005} and  then get  $\DD_\infty$ by the  radical equidimensional  decomposition.
We omit the formal description of the algorithm.

\begin{algorithm}\label{dxj}
\scriptsize 
\DontPrintSemicolon
\LinesNumbered
\SetKwInOut{Input}{input}
\SetKwInOut{Output}{output}
\Input{$F_0, \ldots,F_{m}, J$}
\Output{$\DD_J$}
${\mathcal G}_{\bf u}\leftarrow$ a set of generators of the elimination ideal $\langle F_0, \ldots F_{m}, J\rangle\cap {\mathbb Q}[{\mathbf u}]$\nllabel{elim21}\;
$\DD_J\leftarrow$  the codimension $1$ component of the equidimensional radical decomposition of $\langle {\mathcal G}_{\bf u}\rangle$\;
{\bf return} $\DD_J$
\caption{DX-J}
\end{algorithm}

Practically, 
Algorithm \ref{dxj} may not terminate
in a reasonable time before a computer reaches its memory limit.  
For instance, when using {\tt FGb} for our Gr\"obner Bases computations the memory limit is reached  
due to the large size of the intermediate computational results. 
Since the algorithms for Gr\"obner Bases in {\tt FGb}  are based on row echelon formcomputations \cite[Section 2]{FL2010},
we record the sizes of matrices generated by {\tt FGb} for Models \ref{ex:l4}--\ref{ex:l9}, see Table \ref{tableM} in Section \ref{experiment}. 
This motivates the probabilistic algorithm found in the next subsection.
 
\subsection{Probabilistic Algorithm}\label{mainResults}
In Section \ref{lemmas},  we prepare the lemmas which are used in the algorithms in Sections \ref{mainalgorithm}--\ref{strategies}.  We present the probabilistic algorithm with Strategy 1 (Algorithm \ref{interpolation}) in Section \ref{mainalgorithm}. We discuss two different strategies  (Strategy 2 and Strategy 3) for Algorithm \ref{interpolation}
in Section \ref{strategies}. 
\subsubsection{Lemmas}\label{lemmas}
We prepare the lemmas which are used in the Sections \ref{mainalgorithm} and \ref{strategies}.   
Lemma \ref{coordinate} is used to linearly transform   parameter space. 
Corollary \ref{degree1} and Lemma \ref{degree2} are used to compute the totally degree of $\DD_J$. 
Corollary \ref{csample} is used  in the sampling for interpolation. 
Lemma \ref{normal} is used to compute the degree of $\DD_J$ and to do sampling in Strategy 3. 
In the statements of Lemmas \ref{coordinate}--\ref{degree2} and Corollaries \ref{degree1}--\ref{csample}, we say 
an affine variety $V$ in ${\mathbb C}^{n}$ is {\em non-trivial} if ${\mathbb C}^n\backslash V\neq \emptyset$. 

\begin{lemma}\label{coordinate}
For any $G\in {\mathbb Q}[{\mathbf u}]$, there exists a non-trivial affine variety $V$ in ${\mathbb C}^{n}$ such that for any $(a_1, \ldots, a_n)\in {\mathbb C}^{n}\backslash V$,  the total degree of $G$ equals the degree of 
$B(t_0, t_1, \ldots, t_n)$  {\it w.r.t.} to  $t_0$, where 
\[B(t_0, t_1, \ldots, t_n) = G(t_0, a_1 t_0 + t_1, \ldots, a_n t_0 + t_n)\]
\end{lemma}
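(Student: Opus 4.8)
The plan is to show that a generic linear change of coordinates of the stated ``shear'' type makes the leading behaviour in $t_0$ visible, so that the degree in $t_0$ of the transformed polynomial equals the total degree of $G$. First I would write $G$ as a sum of its homogeneous components, $G = G_d + G_{d-1} + \cdots + G_0$, where $d$ is the total degree of $G$ and $G_d \neq 0$. Substituting $u_0 = t_0$ and $u_i = a_i t_0 + t_i$ for $i = 1, \ldots, n$, each homogeneous piece $G_j(t_0,\, a_1 t_0 + t_1,\, \ldots,\, a_n t_0 + t_n)$ is a polynomial in $t_0, t_1, \ldots, t_n$ of total degree $j$, hence of degree at most $j$ in $t_0$; moreover the coefficient of $t_0^j$ in it (setting $t_1 = \cdots = t_n = 0$) is exactly $G_j(1, a_1, \ldots, a_n)$. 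Therefore the coefficient of $t_0^d$ in $B(t_0, t_1, \ldots, t_n)$ is the constant $G_d(1, a_1, \ldots, a_n)$, and no higher power of $t_0$ can occur since every term of $B$ has total degree at most $d$ in all variables together.

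The key step is then to choose $(a_1, \ldots, a_n)$ so that $G_d(1, a_1, \ldots, a_n) \neq 0$. Consider the polynomial $g(a_1, \ldots, a_n) := G_d(1, a_1, \ldots, a_n) \in {\mathbb Q}[a_1, \ldots, a_n]$. Because $G_d$ is a nonzero homogeneous polynomial, its dehomogenization with respect to the first variable is nonzero as well: if $G_d$ has a monomial $u_0^{e_0} u_1^{e_1} \cdots u_n^{e_n}$ with nonzero coefficient, then $g$ contains the monomial $a_1^{e_1} \cdots a_n^{e_n}$ with that same coefficient, and distinct monomials of $G_d$ of the same total degree give distinct monomials of $g$, so no cancellation occurs. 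Hence $g$ is not the zero polynomial, and $V := {\mathcal V}_a(g) \subsetneq {\mathbb C}^n$ is a non-trivial affine variety. For every $(a_1, \ldots, a_n) \in {\mathbb C}^n \backslash V$ we have $G_d(1, a_1, \ldots, a_n) \neq 0$, so the coefficient of $t_0^d$ in $B$ is nonzero while all higher coefficients vanish; thus the degree of $B$ in $t_0$ equals $d$, the total degree of $G$.

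The one routine subtlety to handle carefully is the claim that the substitution does not raise the degree in $t_0$ beyond the total degree $d$ and that the top coefficient is genuinely a \emph{nonzero constant} rather than a polynomial in $t_1, \ldots, t_n$ that might vanish identically; both follow from the homogeneity grading, since $u_0 \mapsto t_0$ has weighted degree $1$ and $u_i \mapsto a_i t_0 + t_i$ is a sum of two terms each of total degree $1$, so the whole substitution is degree-preserving in the total-degree grading, forcing the $t_0^d$-coefficient to come only from the pure-$t_0$ part of the image of $G_d$. I expect no serious obstacle here; the main thing to get right is bookkeeping on which monomials contribute to the top power of $t_0$, and the argument that dehomogenizing a nonzero homogeneous polynomial yields a nonzero polynomial. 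Once that is in place the non-trivial variety $V$ is explicitly $\{g = 0\}$ and the lemma follows.
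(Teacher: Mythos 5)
Your proof is correct and follows essentially the same route as the paper's: both choose $(a_1,\ldots,a_n)$ off the vanishing locus of the top homogeneous component $G_d$ evaluated at $(1,a_1,\ldots,a_n)$, so that the coefficient of $t_0^d$ in $B$ is the nonzero constant $G_d(1,a_1,\ldots,a_n)$. You simply supply the details the paper leaves as ``easily seen.''
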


{\bf Proof.}
Suppose the total degree of $G$ is $d$ and $G_d$ is the homogeneous component of $G$ with total degree $d$.  For any  $(1, a_1, \ldots, a_n)\in {\mathbb C}^{n+1}\backslash {\mathcal V}_a(G_d)$, 
let $B(t_0, t_1, \ldots, t_n)=G(t_0, a_1 t_0 + t_1, \ldots, a_n t_0 + t_n)$. It is easily seen that 
the degree of 
$B$ {\it w.r.t.} $t_0$ equals $d$. $\Box$

\begin{corollary}\label{degree1}
For any $G\in {\mathbb Q}[{\mathbf u}]$, there exists a non-trivial affine variety $V$ in ${\mathbb C}^{2n+2}$ such that for~any \[(a_0, b_0, \ldots, a_n, b_n)\in {\mathbb C}^{2n+2}\backslash V,\]  the total degree of $G$ equals the degree of $B(t)$
where
\[B(t) = G(a_0 t+b_0, \ldots, a_n t +b_n).\]
\end{corollary}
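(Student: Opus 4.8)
The plan is to derive Corollary \ref{degree1} from Lemma \ref{coordinate} by a reduction that trades the $(n+1)$-variable substitution $u_i \mapsto a_i t + b_i$ for the one used in the lemma. The key observation is that the map $t \mapsto (a_0 t + b_0, \dots, a_n t + b_n)$ is the restriction of an affine change of coordinates to a line, and Lemma \ref{coordinate} already handles exactly such a restriction once we have a way to single out one of the original $n+1$ parameters as the "$t_0$-direction". So first I would introduce the homogeneous top-degree form $G_d$ of $G$ (where $d = \deg G$), exactly as in the proof of Lemma \ref{coordinate}, and observe that the degree of $B(t) = G(a_0 t + b_0, \dots, a_n t + b_n)$ in $t$ equals $d$ precisely when the leading coefficient $G_d(a_0, \dots, a_n)$ is nonzero. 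Thus the candidate exceptional set is
\[
V = \{(a_0, b_0, \dots, a_n, b_n) \in {\mathbb C}^{2n+2} \mid G_d(a_0, \dots, a_n) = 0\},
\]
which is cut out by a single polynomial (the pullback of $G_d$ along the projection forgetting the $b_i$'s), hence an affine variety in ${\mathbb C}^{2n+2}$.

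Next I would check that $V$ is non-trivial, i.e. ${\mathbb C}^{2n+2} \setminus V \neq \emptyset$. Since $G$ has total degree $d$, the form $G_d$ is a nonzero homogeneous polynomial in $n+1$ variables, so $G_d$ is not identically zero and there exists $(a_0, \dots, a_n)$ with $G_d(a_0, \dots, a_n) \neq 0$; appending arbitrary $b_i$'s gives a point outside $V$. Then for any $(a_0, b_0, \dots, a_n, b_n) \notin V$ I would expand $B(t) = G(a_0 t + b_0, \dots, a_n t + b_n)$ as a polynomial in $t$: the coefficient of $t^d$ is $G_d(a_0, \dots, a_n) \neq 0$, and no higher power of $t$ appears because substituting degree-$1$ polynomials into a polynomial of total degree $d$ yields a polynomial of degree at most $d$. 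Hence $\deg_t B = d = \deg G$, which is the claimed equality.

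The only subtlety worth flagging is the bookkeeping about what "non-trivial affine variety in ${\mathbb C}^{2n+2}$" requires: $V$ must genuinely live in the $(2n+2)$-dimensional space (it does, since it is the zero set of a polynomial in the $a_i, b_i$ that merely happens to ignore the $b_i$), and it must have nonempty complement (handled above). Alternatively, one can phrase the whole argument as a direct appeal to Lemma \ref{coordinate}: choose an index, say $0$, perform a preliminary invertible affine reparametrization sending the line $\{(a_0 t + b_0, \dots, a_n t + b_n)\}$ to the $t_0$-axis shifted by the $t_i$, and transport the exceptional variety of Lemma \ref{coordinate} back; but since Lemma \ref{coordinate}'s proof is itself just the "$G_d \neq 0$" computation, the self-contained version above is cleaner and shorter. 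I do not anticipate a genuine obstacle here — the corollary is essentially a restatement of Lemma \ref{coordinate} with the affine substitution written in the form most convenient for the interpolation sampling in Section \ref{mainalgorithm}, and the main care is simply to name the exceptional set correctly and verify its complement is nonempty.
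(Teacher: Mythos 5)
Your proof is correct and is essentially the paper's argument: the paper gives no separate proof of the corollary, treating it as immediate from Lemma \ref{coordinate}, whose proof is exactly your observation that the coefficient of the top power of $t$ is $G_d(a_0,\ldots,a_n)$ and that $G_d$, being a nonzero polynomial, does not vanish identically. Your explicit identification of $V$ as the vanishing locus of $G_d$ (pulled back to ${\mathbb C}^{2n+2}$) is just the careful writing-out of what the paper leaves implicit.
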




\begin{lemma}\label{degree2}
There exists a non-trivial affine variety $V$ in ${\mathbb C}^{2n+2}$ such that for any $(a_0, b_0, \ldots, a_n, b_n)\in {\mathbb C}^{2n+2}\backslash V$,  if 
\[\langle A(t) \rangle=\langle F_0(t), \ldots, F_n(t), F_{n+1}, \ldots, F_m, J \rangle\cap {\mathbb Q}[t]\]
where  $F_i(t)$ is the polynomial by replacing $u_i$ with $a_i t+ b_i$ in $F_i$ $(i=0, \ldots, n)$
and \[B(t)=\DD_J(a_0 t+b_0, \ldots, a_n t +b_n),\]
then $\langle B(t)\rangle =\sqrt{\langle A(t)\rangle}$.
\end{lemma}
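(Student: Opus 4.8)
\textbf{Proof proposal for Lemma \ref{degree2}.}

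The plan is to reduce the statement to two facts: first, that for a generic parametrized line the one-variable elimination ideal $\langle A(t)\rangle$ is the restriction of the elimination ideal $\langle F_0,\dots,F_m,J\rangle\cap\mathbb{Q}[\mathbf{u}]$ to that line; second, that restriction to a generic line commutes with taking radicals and preserves the codimension-$1$ part that defines $\DD_J$. Concretely, write $\mathcal{G}_{\mathbf u}=\langle F_0,\dots,F_m,J\rangle\cap\mathbb{Q}[\mathbf{u}]$ and recall from Proposition \ref{sj} and Notation \ref{ddvj} that $\sqrt{\mathcal{G}_{\mathbf u}}$ decomposes into the prime components of $\LX_J\backslash\{\mathbf 0\}$, of which $\DD_J$ collects exactly the codimension-$1$ ones; any higher-codimension component misses a generic line entirely. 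So the substitution $u_i\mapsto a_it+b_i$ sends each codimension-$1$ prime to a principal radical ideal in $\mathbb{Q}[t]$ (its defining polynomial evaluated on the line, which is squarefree and nonconstant for generic $(a_i,b_i)$ by Corollary \ref{degree1} applied to each $D_j$ and to $D_iD_j$ to keep the restricted factors pairwise coprime), and sends each higher-codimension prime to the unit ideal.

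The key step is then a Closure-Theorem / generic-fiber argument showing $\langle A(t)\rangle=\mathcal{G}_{\mathbf u}(a_0t+b_0,\dots,a_nt+b_n)$ as ideals in $\mathbb{Q}[t]$ for $(a_i,b_i)$ outside a proper subvariety. One inclusion is immediate: substituting into any element of $\mathcal{G}_{\mathbf u}$ lands in $\langle A(t)\rangle$ since $\mathcal{G}_{\mathbf u}\subset\langle F_0,\dots,F_m,J\rangle$ and the substitution is a ring homomorphism that fixes the $y$-variables. For the reverse inclusion I would use that elimination is compatible with the base change $\mathbb{Q}[\mathbf u]\to\mathbb{Q}[t]$ given by $u_i\mapsto a_it+b_i$ away from the locus where this base change fails to be flat / where extra components appear in the fiber — this is the standard "elimination ideal of a generic fiber" statement (a Gröbner-basis specialization argument: a Gröbner basis of $\langle F_0,\dots,F_m,J\rangle$ with respect to an elimination order for $\mathbf u$ specializes to a Gröbner basis of the substituted ideal for all $(a_i,b_i)$ outside the vanishing locus of the product of its leading coefficients, and then intersecting with $\mathbb{Q}[t]$ commutes with specialization). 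Collecting all these exceptional loci — the one from Corollary \ref{degree1} applied to $\DD_J$ and to each $D_iD_j$, the one from the Gröbner specialization, and the one where a higher-codimension component meets the line — into a single non-trivial affine variety $V\subset\mathbb{C}^{2n+2}$ finishes the proof: for $(a_0,b_0,\dots,a_n,b_n)\notin V$ we get
\[
\sqrt{\langle A(t)\rangle}=\sqrt{\mathcal{G}_{\mathbf u}(a_\bullet t+b_\bullet)}=\sqrt{\langle D_1(a_\bullet t+b_\bullet),\dots,D_m(a_\bullet t+b_\bullet)\rangle}=\langle\Pi_{j=1}^K D_j(a_\bullet t+b_\bullet)\rangle=\langle B(t)\rangle,
\]
where the third equality drops the higher-codimension $D_j$'s (which became units) and the last uses that the surviving restricted factors are squarefree and pairwise coprime.

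I expect the main obstacle to be the reverse inclusion $\langle A(t)\rangle\subseteq\mathcal{G}_{\mathbf u}(a_\bullet t+b_\bullet)$, i.e.\ making precise that "elimination commutes with generic specialization of the parameters." One must be careful that $\mathbb{Q}[t]$ is no longer a polynomial ring over a field in the eliminated variables in the naive sense — the substitution is not a coordinate projection — so the cleanest route is to introduce a new parameter, work in $\mathbb{Q}[\mathbf u]$ first, take the elimination Gröbner basis there, and only then specialize, invoking the classical result that Gröbner bases specialize outside the zero set of finitely many leading-coefficient polynomials; the intersection with $\mathbb{Q}[t]$ then behaves well because the elimination order is preserved under the specialization. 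Everything else — squarefreeness, coprimality, and the disappearance of high-codimension components on a generic line — is routine once Corollary \ref{degree1} and the decomposition from Proposition \ref{sj} are in hand.
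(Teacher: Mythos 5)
Your proposal is correct in substance and reaches the right conclusion, but it takes a genuinely different route from the paper. The paper's proof never passes through the full elimination ideal $\langle F_0,\dots,F_m,J\rangle\cap\mathbb{Q}[\mathbf u]$ or through Gr\"obner-basis specialization: it first observes that $\langle B(t)\rangle$ is radical for a generic line (by the definition of $\DD_J$), so it suffices to prove ${\mathcal V}_a(\langle A(t)\rangle)={\mathcal V}_a(\langle B(t)\rangle)$, and it then establishes the two set-theoretic inclusions directly on the correspondence: any $t^*$ in $\pi_t({\mathcal V}_a(F_0(t),\dots,F_m,J))$ maps to a point of $\pi(\LX\cap{\mathcal V}_a(J))$, where $\DD_J$ vanishes, giving ${\mathcal V}_a(A)\subset{\mathcal V}_a(B)$; conversely, a point of ${\mathcal V}_a(B)$ lies on ${\mathcal V}_a(\DD_J)\subset\LX_J$ and, by the Extension Theorem, lifts for generic $(a_i,b_i)$ to an actual solution, giving ${\mathcal V}_a(B)\subset{\mathcal V}_a(A)$. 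Your version replaces that Extension-Theorem step with ``elimination commutes with generic specialization'' via Gr\"obner bases, which is heavier machinery than needed here: since your chain of equalities only uses $\sqrt{\langle A(t)\rangle}=\sqrt{\mathcal{G}_{\mathbf u}(a_\bullet t+b_\bullet)}$, the radical-level (i.e.\ variety-level) statement suffices, and that is exactly what the paper's pointwise argument delivers. What your route buys is a cleaner treatment of the higher-codimension components of $\LX_J$ (you explicitly note that a generic line misses them, whereas the paper's first inclusion tacitly assumes $\DD_J$ vanishes on all of $\pi(\LX\cap{\mathcal V}_a(J))$). One small correction: Corollary \ref{degree1} only gives degree preservation of $D_j$ restricted to a generic line; squarefreeness of each $D_j(a_\bullet t+b_\bullet)$ and pairwise coprimality of the restrictions require a separate (standard, Bertini-type) genericity argument — a generic line meets each hypersurface ${\mathcal V}(D_j)$ transversally in distinct points and misses the codimension-$2$ set ${\mathcal V}(D_i)\cap{\mathcal V}(D_j)$ — so you should not attribute those facts to Corollary \ref{degree1}.
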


{\bf Proof.}
By the definition of $\DD_J$ (Notation \ref{ddvj}),   there exists an affine variety $V_1$ such that for any $(a_0, b_0, \ldots, a_n, b_n)\in {\mathbb C}^{2n+2}\backslash V_1$, $\langle B(t)\rangle$  is radical. Thus, we only need to show that there exists an affine variety $V_2$ in ${\mathbb C}^{2n+2}$ such that for any $(a_0, b_0, \ldots, a_n, b_n)\in {\mathbb C}^{2n+2}\backslash V_2$,  ${\mathcal V}_a(\langle B(t)\rangle) ={\mathcal V}_a(\langle A(t)\rangle)$. 

Suppose $\pi_t$ is the canonical projection: ${\mathbb C}\times {\mathbb C}^{m+1}\rightarrow {\mathbb C}$. 
For any \[t^*\in \pi_t({\mathcal V}_a(F_0(t), \ldots, F_n(t), F_{n+1}, \ldots, F_m, J)),\] 
 let $u_i^* = a_i t^*+ b_i $ (for $i=0, \ldots, n$),  then $(u_0^*, \ldots, u_n^*)\in \pi(\LX\cap {\mathcal V}_a(J))$. Hence $\DD_J(u_0^*, \ldots, u_n^*)=0$ and so $B(t^*)=0$. Thus \[B(t)\in {\mathcal I}(\pi_t({\mathcal V}_a(F_0(t), \ldots, F_n(t), F_{n+1}, \ldots, F_m, J)).\] Therefore,
\begin{align*}
{\mathcal V}_a(A(t))&={\mathcal V}_a({\mathcal I}(\pi_t({\mathcal V}_a(F_0(t), \ldots, F_n(t), F_{n+1}, \ldots, F_m, J)))\\
&\subset {\mathcal V}_a(B(t)).
\end{align*}
For any $t^*\in {\mathcal V}_a(\langle B(t)\rangle)$, let $u_i^* = a_i t^*+ b_i $ for $i=0, \ldots, n$, then 
$(u_0^*, \ldots, u_n^*)\in {\mathcal V}_a(\DD_J)\subset \LX_J$. By the Extension Theorem \cite{CLO2007},  
there exists an affine variety $V_2\subset {\mathbb C}^{2n+2}$ such that if $(a_0, b_0, \ldots, a_n, b_n)\not\in V_2$, then 
$(u_0^*, \ldots, u_n^*)\in \pi(\LX\cap \mathcal{V}_a(J))$, thus \[t^*\in \pi_t({\mathcal V}_a(F_0(t), \ldots, F_n(t), F_{n+1}, \ldots, F_m, J))\subset {\mathcal V}_a(A(t)). \Box\]

\begin{corollary}\label{csample}
There exists a non-trivial affine variety $V$ in ${\mathbb C}^{n}$ such that for any $(a_1, \ldots, a_n)\in {\mathbb C}^{n}\backslash V$,  if 
\[\langle A(u_0) \rangle=\langle F_0,F_1^* \ldots, F_n^*, F_{n+1}, \ldots, F_m, J \rangle\cap {\mathbb Q}[u_0]\]
where 
$F_i^*$ is the polynomial by replacing $u_i$ with $a_i$ in $F_i$ ($i=1, \ldots, n$)
and \[B(u_0)=\DD_J(u_0, a_1, \ldots, a_n),\]
then $\langle B(u_0)\rangle =\sqrt{\langle A(u_0)\rangle}$.
\end{corollary}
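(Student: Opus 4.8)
The plan is to imitate the proof of Lemma~\ref{degree2}, replacing the generic line $u_i=a_it+b_i$ by the axis-parallel line $L_{\mathbf a}:=\{(u_0,a_1,\dots,a_n):u_0\in{\mathbb C}\}$ cut out by $\mathbf a=(a_1,\dots,a_n)$. The one new ingredient is that $\{L_{\mathbf a}\}_{\mathbf a}$ is merely a parallel pencil, not a generic family of lines, so the usual ``a generic line meets a variety of dimension $d\le n$ only in its dense open part'' must be replaced by a dimension count for the image, under the projection $\rho\colon{\mathbb C}^{n+1}\to{\mathbb C}^{n}$, $(u_0,\dots,u_n)\mapsto(u_1,\dots,u_n)$, of a suitable bad locus. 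For the same reason one cannot simply specialise $(a_0,b_0,\dots,a_n,b_n)$ to $(1,0,0,b_1,\dots,0,b_n)$ in Lemma~\ref{degree2}: that $n$-plane of parameter values might lie entirely inside the exceptional variety produced there.

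First I would dispose of radicality. Since $\DD_J=\prod_{j=1}^{K}D_j$ with the $D_j$ pairwise non-associate irreducibles (Notation~\ref{ddvj}), the $u_0$-discriminant of each $D_j$ involving $u_0$, the $u_0$-resultant of each such pair, and each $D_j$ not involving $u_0$, are all nonzero elements of ${\mathbb Q}[u_1,\dots,u_n]$; let $V_1\subsetneq{\mathbb C}^{n}$ be cut out by their product. For $\mathbf a\notin V_1$ the univariate polynomial $B(u_0)=\DD_J(u_0,a_1,\dots,a_n)$ is a nonzero scalar multiple of a squarefree polynomial, so $\langle B(u_0)\rangle$ is radical. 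It then remains to produce $V_2\subsetneq{\mathbb C}^{n}$ with ${\mathcal V}_a(\langle B(u_0)\rangle)={\mathcal V}_a(\langle A(u_0)\rangle)$ for $\mathbf a\notin V_2$, and to take $V=V_1\cup V_2$.

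Write $Y_{\mathbf a}:={\mathcal V}_a(F_0,F_1^{*},\dots,F_n^{*},F_{n+1},\dots,F_m,J)$ and let $\pi_0\colon{\mathbb C}\times{\mathbb C}^{m+1}\to{\mathbb C}$ be the projection to the $u_0$-coordinate (the analogue of $\pi_t$ in Lemma~\ref{degree2}). The inclusion ${\mathcal V}_a(\langle A(u_0)\rangle)\subseteq{\mathcal V}_a(B(u_0))$ needs no genericity: for $u_0^{*}\in\pi_0(Y_{\mathbf a})$ the point $\mathbf u^{*}:=(u_0^{*},a_1,\dots,a_n)$ lies in $\pi(\LX\cap{\mathcal V}_a(J))\subseteq\LX_J$, and $\DD_J$ vanishes on $\LX_J$, so $B(u_0^{*})=\DD_J(\mathbf u^{*})=0$; hence $B(u_0)\in{\mathcal I}(\pi_0(Y_{\mathbf a}))$ and the Closure Theorem \cite{CLO2007} gives ${\mathcal V}_a(\langle A(u_0)\rangle)=\overline{\pi_0(Y_{\mathbf a})}\subseteq{\mathcal V}_a(B(u_0))$. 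For the reverse inclusion I would use that $\pi(\LX\cap{\mathcal V}_a(J))$ is a constructible dense subset of $\LX_J$, hence contains a dense open $U\subseteq\LX_J$; since $\LX_J$ is a cone and (assuming $\LX_J\subsetneq{\mathbb P}^{n}$) ${\mathcal V}_a(\DD_J)=\bigcup_j{\mathcal V}_a(D_j)$ is a union of irreducible components of $\LX_J$, the set ${\mathcal V}_a(\DD_J)\setminus U$ is a proper closed subset of ${\mathcal V}_a(\DD_J)$ of dimension at most $n-1$, so its $\rho$-image lies in a proper variety $V_2\subsetneq{\mathbb C}^{n}$. Then for $\mathbf a\notin V_2$ and $u_0^{*}$ with $B(u_0^{*})=0$ the point $\mathbf u^{*}=(u_0^{*},a_1,\dots,a_n)\in{\mathcal V}_a(\DD_J)$ has $\rho(\mathbf u^{*})=\mathbf a\notin V_2$, so $\mathbf u^{*}\in U\subseteq\pi(\LX\cap{\mathcal V}_a(J))$; lifting $\mathbf u^{*}$ along $\pi$ and projecting to $u_0$ puts $u_0^{*}$ in $\pi_0(Y_{\mathbf a})\subseteq{\mathcal V}_a(\langle A(u_0)\rangle)$. (Equivalently one invokes the Extension Theorem \cite{CLO2007} verbatim as in Lemma~\ref{degree2}; the degenerate case $\DD_J=1$ is trivial, since then $B(u_0)=1$ while, for $\mathbf a$ off the proper subvariety $\rho(\LX_J)$, also $Y_{\mathbf a}=\emptyset$.)

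The main obstacle is this reverse inclusion — concretely, checking that ${\mathcal V}_a(\DD_J)\setminus U$, the locus where a point of ${\mathcal V}_a(\DD_J)$ fails to lift back into $\LX\cap{\mathcal V}_a(J)$, genuinely drops dimension, i.e.\ that $U$ is dense in every irreducible component of ${\mathcal V}_a(\DD_J)$, so that its $\rho$-image is a proper subvariety of ${\mathbb C}^{n}$. This rests on the fact that the ${\mathcal V}_a(D_j)$ are exactly the top-dimensional components of $\LX_J$: by Notation~\ref{ddvj} they are its codimension-one components in ${\mathbb P}^{n}$, hence cones of dimension $n$ in ${\mathbb C}^{n+1}$, hence irreducible components of the cone $\LX_J$. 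Everything else is bookkeeping parallel to Lemma~\ref{degree2}.
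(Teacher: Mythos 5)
Your proof is correct, and it follows the same template the paper intends for this corollary, namely the proof of Lemma~\ref{degree2} transported to the axis-parallel line $u_1=a_1,\dots,u_n=a_n$ (the paper itself states the corollary without proof). The one place where you genuinely depart from, and improve on, the paper's argument is the reverse inclusion ${\mathcal V}_a(B)\subseteq{\mathcal V}_a(\langle A\rangle)$: the proof of Lemma~\ref{degree2} simply invokes the Extension Theorem to produce a bad variety $V_2$ in the $(a_i,b_i)$-space, and, as you rightly observe, one cannot just restrict that statement to the $n$-plane $\{(1,0,0,b_1,\dots,0,b_n)\}$ of parameters, since that plane could a priori sit inside $V_2$. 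Your replacement — using that $\pi(\LX\cap{\mathcal V}_a(J))$ is constructible and dense in $\LX_J$, that the ${\mathcal V}_a(D_j)$ are exactly the top-dimensional components of the cone $\LX_J$ so the non-liftable locus ${\mathcal V}_a(\DD_J)\setminus U$ has dimension at most $n-1$, and hence that its image under the coordinate projection $\rho$ forgetting $u_0$ misses a generic $\mathbf a$ — is sound and actually supplies the justification the paper leaves implicit. Your explicit squarefreeness argument for $B(u_0)$ via discriminants and pairwise resultants of the $D_j$ with respect to $u_0$ is likewise a concrete instantiation of the paper's bare "there exists $V_1$" (for full rigor one would also throw the leading coefficients of the $D_j$ with respect to $u_0$ into the product cutting out $V_1$, so that specialization does not drop degree, but this is routine bookkeeping). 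In short: same route, with the genericity-of-the-pencil step done properly rather than asserted.
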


\begin{lemma}\label{normal}
Suppose $X$ is an algebraic model with ML degree $N$. 
Let $g$ generate the reduced codimension $1$ component of 
$\langle F_0, \ldots, F_{m}\rangle\cap {\mathbb Q}[u_0, \ldots, u_n, y_0]$.
If ${\tt degree}(g, y_0)=N$ and 
$D_{y_0}={\tt resultant}(g, \frac{\partial g}{\partial y_0}, y_0)$,
then $D_{y_0}\in \langle \DD_J\rangle$. 
\end{lemma}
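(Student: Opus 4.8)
The plan is to show that the univariate discriminant $D_{y_0} = \mathtt{resultant}(g, \partial g/\partial y_0, y_0)$ vanishes on $\LX_J$ and is homogeneous, so that it lies in $\mathcal{I}(\LX_J\setminus\{\mathbf{0}\})$; then, since $\DD_J$ collects exactly the codimension-$1$ radical components of this ideal, divisibility $D_{y_0}\in\langle\DD_J\rangle$ will follow once we check $D_{y_0}$ has no extraneous (non-discriminantal) codimension-$1$ factors. First I would fix generic data $\mathbf{u}^*$ outside the non-properness locus and the branch locus, so that $\pi^{-1}(\mathbf{u}^*)\cap\LX$ consists of $N$ distinct points; the hypothesis $\mathtt{degree}(g,y_0)=N$ then says that the $y_0$-coordinates of these $N$ points are exactly the $N$ roots of $g(\mathbf{u}^*, y_0)$, and for generic $\mathbf{u}^*$ these roots are simple (this is where one uses that $g$ generates the \emph{reduced} codimension-$1$ elimination component, together with Theorem \ref{MLD} giving constancy of the fiber cardinality).

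The key step is the following implication: if $\mathbf{u}^*\in\LX_J$, i.e.\ some point of $\pi^{-1}(\mathbf{u}^*)\cap\LX$ is a critical point of $\pi|_{\LX}$ or lies in the singular locus, then the map $\mathbf{u}\mapsto\{y_0\text{-coordinates of the fiber}\}$ degenerates there, forcing either two sheets of the fiber to collide in the $y_0$-coordinate or a sheet to run off to infinity — in either case $g(\mathbf{u}^*, y_0)$ acquires a multiple root or drops degree in $y_0$, hence $D_{y_0}(\mathbf{u}^*)=0$. To make this precise I would argue as in Lemma \ref{degree2}: by the Extension Theorem, outside a proper subvariety the specialization commutes with elimination, so $g$ specialized at $\mathbf{u}^*\in\LX_J$ genuinely sees the collision/degeneration coming from the definition of $\LX_J$ as the closure of critical values plus the projected singular locus (cf.\ \cite[Definition 2]{DV2005}). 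This gives $D_{y_0}\in\mathcal{I}(\LX_J)$ on a dense open subset of $\LX_J$, hence on all of $\LX_J\setminus\{\mathbf{0}\}$ by taking closures.

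Next I would verify homogeneity of $D_{y_0}$ as a polynomial in $\mathbf{u}$: by the scaling symmetry of the Lagrange likelihood equations (facts F1--F2 in the proof of Proposition \ref{sj}), replacing $\mathbf{u}$ by $\alpha\mathbf{u}$ rescales $\Lambda$ and leaves $\mathbf{p}$ fixed, so $g(\alpha\mathbf{u}, y_0)$ is, up to a power of $\alpha$, a rescaling of $g(\mathbf{u}, y_0)$; the resultant of $g$ and $g_{y_0}$ in $y_0$ is therefore weighted-homogeneous, and in the $\mathbf{u}$-grading it is homogeneous. Combining homogeneity with vanishing on $\LX_J\setminus\{\mathbf{0}\}$ puts $D_{y_0}$ in the homogeneous ideal $\mathcal{I}(\LX_J\setminus\{\mathbf{0}\})=\langle D_1,\dots,D_m\rangle$ of Proposition \ref{sj}; passing to codimension-$1$ components, every irreducible factor of $D_{y_0}$ of codimension $1$ must be one of the $\mathcal{V}(D_j)$ appearing in $\DD_J$, so $\DD_J\mid D_{y_0}$, i.e.\ $D_{y_0}\in\langle\DD_J\rangle$.

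The main obstacle I anticipate is the generic-simplicity/degree-bookkeeping step: one must argue carefully that $\mathtt{degree}(g,y_0)=N$ forces the $N$ fiber points to be distinguished by their $y_0$-coordinates for generic $\mathbf{u}$ (so that a genuine collision is detected by the discriminant rather than hidden), and conversely that no spurious codimension-$1$ locus — e.g.\ where $g$ drops degree in $y_0$ but $\LX$ stays proper, or where two $y_0$-coordinates happen to coincide without any critical-point behavior — contributes a factor to $D_{y_0}$ outside $\LX_J$. Handling the "degree drop'' case requires relating the leading coefficient of $g$ in $y_0$ to the non-properness locus $\LX_\infty$, and one may need to either absorb it into $\DD$ via $\DD_\infty$ or invoke genericity to rule it out; I would isolate this as the technical heart of the argument and treat the rest by the closure/Extension-Theorem machinery already used for Lemma \ref{degree2}.
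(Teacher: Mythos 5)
Your overall architecture (show that $D_{y_0}$ vanishes on $\LX_J$, then use radicality and irreducibility of the factors of $\DD_J$ to get divisibility) would suffice, but your key step is a genuine gap. You assert that if $\mathbf{u}^*\in\LX_J$ then two sheets of the fiber collide in the $y_0$-coordinate or escape to infinity, ``hence $g(\mathbf{u}^*,y_0)$ acquires a multiple root or drops degree.'' This does not follow from what you have established: $g$ is obtained by elimination \emph{before} specializing $\mathbf{u}$, and at special parameter values the specialized polynomial $g(\mathbf{u}^*,\cdot)$ need not generate the elimination ideal of the specialized system, nor need its root multiplicities reflect the multiplicity structure of the fiber. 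Your proposed fix --- ``outside a proper subvariety the specialization commutes with elimination'' --- is exactly where the argument breaks down: $\LX_J$ is itself a proper subvariety of $\mathbb{C}^{n+1}$, and nothing prevents an entire codimension-$1$ component $\mathcal{V}(D_j)$ from lying inside the exceptional locus of the Extension Theorem, which is precisely where you need the commutation to hold. Moreover, a fiber point at which $J$ vanishes is a degenerate point of the scheme-theoretic fiber, but over special $\mathbf{u}^*$ the fiber may have more than $N$ points or positive dimension, so the presence of one degenerate point does not by itself force two roots of $g(\mathbf{u}^*,\cdot)$ to coincide. The hypothesis $\mathtt{degree}(g,y_0)=N$ has to be used to close exactly this loophole, and your sketch never does so.

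The paper's proof runs in the opposite direction and avoids all of this. Since $\langle\DD_J\rangle$ is radical, the claim is equivalent to $\mathcal{V}(\DD_J)\subset\mathcal{V}(D_{y_0})$, and the paper proves the contrapositive: if $D_{y_0}(\mathbf{u})\neq 0$ then $g(\mathbf{u},\cdot)$ has $N$ \emph{distinct} roots; the Projective Extension Theorem (which needs no genericity hypothesis) extends each to a solution of the full system in $\mathbb{P}^{m+1}$, producing $N$ distinct solutions; since $N$ is the ML degree this saturates the solution count and forces $\mathbf{u}\notin\mathcal{V}(\DD_J)$. Two smaller points: your homogeneity paragraph is unnecessary, since $\DD_J\mid D_{y_0}$ already follows from $D_{y_0}$ vanishing on each irreducible $\mathcal{V}(D_j)$, with no grading needed; and the ``main obstacle'' you isolate --- ruling out extraneous codimension-$1$ factors of $D_{y_0}$ --- concerns the reverse inclusion $\mathcal{V}(D_{y_0})\subset\mathcal{V}(\DD_J)$, which the lemma does not claim; extraneous factors of $D_{y_0}$ are harmless for $D_{y_0}\in\langle\DD_J\rangle$.
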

\begin{proof}
By Notation \ref{ddvj}, 
$\langle \DD_J\rangle$ is radical. So we only need to show ${\mathcal V}(\DD_J)\subset {\mathcal V}(D_{y_0})$. 
Alternatively, we show ${\mathbb C}^{n+1}\backslash{\mathcal V}(D_{y_0})\subset {\mathbb C}^{n+1}\backslash{\mathcal V}(\DD_J)$.
For any ${\bf u}\in {\mathbb C}^{n+1}\backslash{\mathcal V}(D_{y_0})$, $g({\bf u}, y_0)$ has $N$ distinct complex solutions {\it w.r.t} $y_0$. 
By Projective Extension Theorem \cite[Page 403, Chapter 8. Corollary 10]{CLO2007}, all the $N$ distinct complex solutions can be extended in ${\mathbb P}^{m+1}$. So the likelihood equations $F_0({\bf u})=\ldots=F_{m}(\bf u)=0$ have $N$ distinct solutions in 
${\mathbb P}^{m+1}$. Therefore, ${\bf u}\not\in {\mathcal V}(\DD_J)$. 
\end{proof}

\subsubsection{Probabilistic Algorithm with Strategy 1}\label{mainalgorithm}

We first give an example for the main probabilistic algorithm.
We will discuss different ways the main probabilistic algorithm can be adjusted in Section \ref{strategies}, 
so we say the main algorithm presented in this subsection is the probabilistic algorithm with Strategy~1. 
 
\noindent
\begin{example}[Algorithm \ref{interpolation} (Strategy 1)] \label{ex:algorithm}
Suppose the radical of the elimination ideal $\langle F, J \rangle\cap {\mathbb Q}[{\bf u}]$ is generated by $D(u_0, u_1, u_2, u_3)$,  where  
$F=u_0p^3 + u_1p^2 + u_2p + u_3$ and $J=\frac{\partial F}{\partial p}=3u_0p^2 + 2u_1p + u_2$. 
With the standard elimination algorithm, we can compute $D$ to be the unique up to scaling  homogeneous polynomial 
$$27u_0^2u_3^2 - 18u_0u_1u_2u_3 + 4u_0u_2^3 + 4u_1^3u_3 - u_1^2u_2^2.$$ 
Instead, we will compute $D$ by the steps below.

First, 
we restrict our parameter space to a general $1$-dimensional affine space by replacing each parameter $u_i$ with $\square t+\square$ where $\square's$ denote a general choice. 
For example,
we substitute $u_0=7t+11, u_1=3t+2$,  $u_2=5t+6$, $u_3=4t+13$  into $F$ and $J$.
We  compute the radical of the elimination ideal $\langle F(t, p), J(t, p) \rangle\cap {\mathbb Q}[t]$  to get an ideal generated by 
$$  t^4+\frac{173086}{17315}t^3+\frac{632753}{17315}t^2+\frac{972374}{17315}t+\frac{531011}{17315}.$$ 
By  Lemma \ref{degree2} and  Corollary  \ref{degree1},
the total degree of $D$ equals the degree of the univariate polynomial above, if the $1$-dimensional affine space is general (this is because we are intersecting the hypersurface defined by $D$ with a general line). 
Similarly, we are able to compute the degrees of $D$ with respect to  $u_0, u_1, u_2, u_3$ and get $2, 3, 3, 2$, respectively.  
Since $D$ is only unique up to scaling, to interpolate, we must fix a consistent choice of scaling. 
If possible, we will fix one of the coefficients of the pure powers $u_i^{\text{deg} D}$  to be $1$. 
However, this is not possible if the coefficients of each pure power is zero, which occurs when there exists no $u_i$ such that the degree of $D$ 
 with respect to $u_i$ equals the total degree.
So when this is not possible, we will perform a linear coordinate change so that $D$ has a pure power in the new coordinates with a nonzero coefficient. 

Second, we perform a linear change of variables by the following substitution $u_i=\square v_i+\square v_1$ in $D$ to get the polynomial $D^*(v_0,v_1,v_2,v_3)$.
 For a general substitution,
 e.g.  $u_0=v_1-2v_0$, $u_1=v_1$,  $u_2=v_1-3v_2$, $u_3=v_1-5v_3$, 
  the polynomial $D^*$ has a nonzero coefficient for the pure power $v_1^{\text{deg} D}$. 
  

Now, we may assume $D^*$ has the following structure (here we have  a fixed scaling):
$$D^*=v_1^4 + C_1v_1^3 + C_2v_1^2 + C_3v_1 + C_4,$$  where $C_i$ is a homogeneous polynomial in $v_0, v_2, v_3$ with total degree $i$.
This means the monomials of $C_1$ are  $v_0, v_2, v_3$, and we have 
$C_1 = C_{11}v_0 + C_{12}v_2 + C_{13}v_3$. 
Once we determine the coefficients $C_{11}, C_{12}, C_{13}$, we are done with $C_1$. 
This will be done by interpolation; 
rather than determining $D^*$ directly, we determine $D^*$ restricted to a line which induces linear constraints on the coefficients of $D^*$. 
For example,
we substitute $v_0=13$, $v_2=4$, $v_3=5$ into $F^*(v_0, v_1, v_2, v_3, p)$ and $J^*(v_0, v_1, v_2, v_3, p)$, 
and then   
we  compute the radical of the elimination ideal 
$\langle F^*(13, v_1, 4, 5, p), J^*(13, v_1, 4, 5, p)\rangle\cap {\mathbb Q}[v_1]$
finding it is generated~by 
\begin{equation}\label{inter1}
v_1^4+\frac{243}{2}v_1^3+\frac{87939}{16}v_1^2+\frac{425385}{4}v_1+\frac{2896803}{4}.
\end{equation}
%
By  Corollary \ref{csample}, 
the polynomial in  \eqref{inter1} is $D^*(13, v_1, 4, 5)$.
Similarly, we compute to find
\begin{small}
\begin{equation}\label{inter2}
D^*(7, v_1, 3, 11)=v_1^4-168v_1^3+\frac{36873}{4}v_1^2-\frac{690201}{4}v_1+\frac{4012281}{4}
\end{equation}
\end{small}
\begin{small}
\begin{equation}\label{inter3}
D^*(2, v_1, 8, 9)=v_1^4-\frac{221}{2}v_1^3+\frac{57627}{16}v_1^2-\frac{60183}{2}v_1+68499.
\end{equation}
\end{small}%
Comparing the coefficients of $D^*$ to Equations (\ref{inter1}--\ref{inter3}), we have the linear relations below
\begin{small}
\[
13C_{11}+4C_{12}+5C_{13}=\footnotesize{-\frac{243}{2}},\quad
7C_{11}+3C_{12}+11C_{13}=\footnotesize{-168},\quad
2C_{11}+8C_{12}+9C_{13}=\footnotesize{-\frac{221}{2}}.
\]
\end{small}%
Solving the linear system yields $C_{11}=-5$, $C_{12}=\frac{3}{2}$, $C_{13}=-\frac{25}{2}$.
We similarly compute $C_2, C_3, C_4$ to determine $D^*$:
\begin{equation}\label{Dexample}
\begin{small}
\begin{array}{c}
{\frac{27}{4}v_0^2v_1^2-\frac{135}{2}v_0^2v_1v_3+\frac{675}{4}v_0^2v_3^2-5v_0v_1^3-\frac{9}{4}v_0v_1^2v_2+\frac{225}{4}v_0v_1^2v_3+}\\
{-\frac{27}{2}v_0v_1v_2^2+\frac{135}{4}v_0v_1v_2v_3}
{-\frac{675}{4}v_0v_1v_3^2+\frac{27}{2}v_0v_2^3+v_1^4+\frac{3}{2}v_1^3v_2+}\\
{-\frac{25}{2}v_1^3v_3+\frac{99}{16}v_1^2v_2^2-\frac{135}{8}v_1^2v_2
v_3+\frac{675}{16}v_1^2v_3^2-\frac{27}{4}v_1v_2^3.}
\end{array}
\end{small}%
\end{equation}
By applying the inverse linear change of coordinates $v_0=-\frac{u_0}{2}+\frac{u_1}{2}, v_1=u_1, v_2=-\frac{u_2}{3}+\frac{u_1}{3}, v_3=-\frac{u_3}{5}+\frac{u_1}{5}$ to $D^*$ and removing the denominator, 
we recover $D$. 
 \end{example}

We present the formal description of the probabilistic algorithm in Algorithm \ref{interpolation}.  We explain the main algorithm (Algorithm \ref{interpolation}) and all the sub-algorithms (Algorithms \ref{sample}--\ref{degree}) below.

{\bf Algorithm \ref{degree} (Degree).}  The probabilistic algorithm terminates correctly by  Corollary \ref{degree1}
 and Lemma~\ref{degree2}.

 {\bf Algorithm \ref{linear} (LinearOperator).}  The probabilistic algorithm terminates correctly by  Lemma \ref{coordinate}.

{\bf Algorithm \ref{sample} (Intersect).} The probabilistic algorithm terminates correctly  by  Corollary \ref{csample}.

{\bf Algorithm \ref{interpolation} (InterpolationDX-J).} 

{\bf Lines 1--5.} We compute the total degree of $\DD_J$ and the degrees of $\DD_J$ with respect to $u_0, \ldots, u_d$: $d, d_0, \ldots, d_n$   by  Algorithm \ref{degree}.  Algorithm \ref{linear} guarantees  that $d_0=d$ by applying a proper linear transformation $u_1=a_1\cdot u_0+u_1, \ldots, u_n = a_n\cdot u_0 + u_n$. 

{\bf Lines 6--7.}
 Suppose $\DD_J=u_0^{d}+C_1u_0^{d-1}+\ldots+C_{d-1}u_0+C_{d}$ where $C_1, \ldots, C_{d}\in {\mathbb Q}[u_1, \ldots, u_n]$ and the total degree of $C_j$ is $j$. 
For $j=1, \ldots, n$, we estimate all the possible monomials of $C_j$  by computing the set
\[\{u_1^{\alpha_1}\cdots u_n^{\alpha_n}|\alpha_1+\ldots+ \alpha_n = j, 0\leq \alpha_i\leq d_i\}\] 
Assume the cardinality of the set is $N_j$ and rename these monomials as ${U}_{j, 1}, \ldots, {U}_{{j, N_j}}$. Then we assume 
\[C_j = c_{j, 1}U_{j, 1}+\ldots+c_{j,N_j}U_{j, N_j}\]
where $c_{j, 1}, \ldots, c_{j, N_j}\in {\mathbb Q}$. The rest of the algorithm is to compute  $c_{j, 1}, \ldots, c_{j, N_j}$. 

{\bf Lines 8--12.} For each $j$, for $k=1,\ldots, N_j$, for a random integer vector ${\bf b}_k = (b_{k, 1}, \ldots, b_{k, n})$, 
we compute $\DD_J(u_0, {\bf b}_k)$ by Algorithm \ref{sample}.  That means to compute the function value $C_j({\bf b}_k)$ without knowing $
\DD_J$. 

 {\bf Lines 13--15.} For each $j$, we solve a square linear equation system for the unknowns $c_{j, 1}, \ldots, c_{j, N_j}$:
\begin{align*}
c_{j, 1}U_{j, 1}({{\bf b}_{k}})+\ldots+c_{j,N_j}U_{j, N_j}({\bf b}_{k})=C_j({\bf b}_k), 
\\(k=1, \ldots, N_j)
\end{align*}
It is known that we can choose nice ${\bf b}_k$ probabilistically such that the coefficient matrix of the linear equation system is non-singular. 

{\bf Lines 16.} We apply the inverse linear transformation in the parameter space to get the $\DD_J$ for the original $F_0, \ldots, F_m$.

We can also apply the interpolation idea to  Algorithm {\tt PROPERNESSDEFECTS} \cite{DV2005} and get a probabilistic algorithm to compute the $\DD_\infty$.  We omit the formal description of the algorithm.

\begin{algorithm}\label{interpolation}
\scriptsize
\DontPrintSemicolon
\LinesNumbered
\SetKwInOut{Input}{input}
\SetKwInOut{Output}{output}
\Input{$F_0, \ldots,F_{m}, J$}
\Output{ $\DD_J$}
$a_1, \ldots, a_n\leftarrow $LinearOperator$(F_0, \ldots, F_m, J)$\;
  \For {$i$ {\bf from} $1$ {\bf to} $n$}
        {
        $F'_i\leftarrow$ replace $u_i$ in $F_i$ with $a_i u_0 + u_i$}
     $NewSys\leftarrow F_0, F'_1 \ldots,F'_n, F_{n+1}, \ldots, F_m, J$\; 
    $d, d_0, \ldots, d_n\leftarrow $Degree$(NewSys)$   

 \For {$j$ {\bf from} $1$ {\bf to} $d$\nllabel{et1}}
{

Rename all the monomials of the set
\[\{u_1^{\alpha_1}\cdots u_n^{\alpha_n}|\alpha_1+\ldots+ \alpha_n = j, 0\leq \alpha_i\leq d_i\}\]
as ${U}_{j, 1}, \ldots, {U}_{{j, N_j}}$}

$N\leftarrow \max(N_1, \ldots, N_d)$\;\nllabel{i1}
\For {$k$ {\bf from} $1$ {\bf to} $N$\nllabel{et2}}
{
$b_{ k, 1}, \ldots, b_{k, n}\leftarrow$ random integers\;
$A(u_0)\leftarrow$Intersect$(NewSys, b_{k, 1}, \ldots, b_{k, n})$\;
$C^*_{d, k}, \ldots, C^*_{1, k}\leftarrow$ the coefficients of $A(u_0)$ with respect to $u_0^0, \ldots, u_0^{d-1}$\;
}
\For {$j$ {\bf from} $1$ to $d$}
{
${\mathcal M}_j\leftarrow $ $N_j\times N_j$ matrix whose $(k, r)$-entry  is  $U_{j, r}(b_{ k, 1}, \ldots, b_{k, n})$\nllabel{matrix}\;
$C_j\leftarrow ({U}_{j, 1}, \ldots, {U}_{{j, N_j}}){\mathcal M}_j^{-1}(C^*_{j, 1}, \ldots, C^*_{j, N_j})^{T}$
\nllabel{i2}}
$\DD_J\leftarrow$ replace $u_1, \ldots, u_n$ in $u_0^d + \Sigma_{i=0}^{d-1}C_{d-i}u_0^i$ with $u_1-a_1\cdot u_0, \ldots, u_n - a_n\cdot u_0$\;
{\bf Return}  $\DD_J$
\caption{({\bf Main Algorithm}) InterpolationDX-J}
\end{algorithm}

\begin{algorithm}\label{sample}
\scriptsize 
\DontPrintSemicolon
\LinesNumbered
\SetKwInOut{Input}{input}
\SetKwInOut{Output}{output}
\Input{  $F_0, \ldots,F_{m}, J$ and integers $b_1, \ldots, b_n$}
\Output{ $\DD_J(u_0, b_1, \ldots, b_n)$}
\For {$i$ {\bf from} $1$ {\bf to} $n$} {
$F_i^*\leftarrow$ replace $u_i$ in $F_k$ with $b_i$\; 
}

$A(u_0)\leftarrow$ generator of the radical of elimination ideal $\langle F_0, F_{1}^*,\ldots,F_n^*, F_{n+1}, \ldots F_{m}, J\rangle\cap {\mathbb Q}[u_0]$\nllabel{elim31}\;
{\bf return} $A(u_0)$
\caption{Intersect}
\end{algorithm}

\begin{algorithm}\label{linear}
\scriptsize
\DontPrintSemicolon
\LinesNumbered
\SetKwInOut{Input}{input}
\SetKwInOut{Output}{output}
\Input{  $F_0, \ldots,F_{m}, J$}
\Output{ $a_1, \ldots, a_n$ such that  the total degree of $\DD_J$ equals the degree of $\DD_J(u_0, a_1\cdot u_0 + u_1, \ldots, a_n\cdot u_0 + u_n)$ with respect to  $u_0$}
$d, d_0, \ldots, d_n\leftarrow $Degree$(F_0, \ldots, F_m, J)$\;
\eIf{$d=d_0$}
{{\bf return } $0, \ldots, 0$}
   {
    \For {$i$ {\bf from} $1$ {\bf to} $n$}
        { $a_i\leftarrow$ a random integer\; 
        $F'_i\leftarrow$ replace $u_i$ in $F_i$ with $a_i\cdot u_0 + u_i$}
    $NewSys\leftarrow F_0, F'_1 \ldots,F'_n, F_{n+1}, \ldots, F_m, J$\;    
    $d, d_0, \ldots, d_n\leftarrow $Degree$(NewSys)$   
     }
  {\bf return} $a_1, \ldots, a_n$
\caption{LinearOperator}
\end{algorithm}

\begin{algorithm}\label{degree}
\scriptsize
\DontPrintSemicolon
\LinesNumbered
\SetKwInOut{Input}{input}
\SetKwInOut{Output}{output}
\Input{  $F_0, \ldots,F_{m}, J$}
\Output{ $d, d_0, \ldots d_n$, where $d$ is the total degree of  $\DD_J$ and $d_i$ is the degree of $\DD_J$ with respect to each $u_i$}
\For {$i$ {\bf from} $0$ {\bf to} $n$} {

$F_0^*, \ldots, F_n^*\leftarrow$ replace $u_0, \ldots, u_{i-1}, u_{i+1}, \ldots, u_n$ in $F_0, \ldots, F_n$ with random integers\; 

$A(u_i)\leftarrow$ generator of the radical of elimination ideal $\langle F_0^*, \ldots,F_n^*, F_{n+1}, \ldots F_{m}, J\rangle\cap {\mathbb Q}[u_i]$\nllabel{elim51}\;
$d_i\leftarrow$ degree of $A(u_i)$\;
$a_i,  b_i\leftarrow$ random integers\;
}
$F_0(t), \ldots, F_n(t)\leftarrow$ replace $u_0, \ldots, u_n$ with $a_0\cdot t + b_0, \ldots, a_n\cdot t+b_n$ in $F_0, \ldots, F_n$\; 
$A(t)\leftarrow$ generator of the radical of elimination ideal $\langle F_0(t), \ldots,F_n(t), F_{n+1}, \ldots F_{m}, J\rangle\cap {\mathbb Q}[t]$\nllabel{elim52}\;
$d\leftarrow$ degree of $A(t)$\;
{\bf return} $d, d_0, \ldots, d_n$
\caption{Degree}
\end{algorithm}

\begin{remark}
According to the Notation \ref{ddvj},  when the codimension of ${\LX}_J\backslash \{\bf{0}\}$ (${\LX}_{\infty}\backslash \{\bf{0}\}$) is greater than $1$, we define $
\DD_J$ ($
\DD_{\infty}$) is $1$.
In this case, the number of real/positive solutions remains constant over a dense Zariski open set of the entire parameter space. 
\end{remark}

\subsubsection{Strategy 2 and Strategy 3}\label{strategies} 
In this section, we discuss different strategies for the probabilistic algorithm of  Section \ref{mainalgorithm}. 
Note in Algorithm \ref{interpolation},lines \ref{et1}--\ref{et2}, we compute all the possible terms in $\DD_J$ and in Algorithm \ref{interpolation}, lines \ref{i1}--\ref{i2}, we interpolate all these terms at once. 
From the timings in the ``Algorithm \ref{interpolation}-Strategy 1"  columns of Tables \ref{table1dense}-\ref{literatureOld}, we see for most of examples,  Strategy 1 does not perform better than Algorithm \ref{dxj}. 

By experiments, we find the main problem of Strategy 1 is the lifting step. In fact, it is expense to compute the inverse of ${\mathcal M}_j$ in  Algorithm \ref{interpolation}, Line \ref{matrix}, which can be a large size matrix with (large) rational entries. In order to overcome this problem, we interpolate one parameter by one parameter, say Strategy 2.  
We omit the naive formal description of the algorithm, instead
to explain how Strategy 2 works, we provide  Example \ref{ex:strategy2} . 


\begin{example}[Algorithm \ref{interpolation}  (Strategy 2)]\label{ex:strategy2}
We compute $D$ in Example \ref{ex:algorithm} by Strategy 2. 
Since the first two steps of Strategy 2 are the same as that of Strategy 1 (outlined in Example \ref{ex:algorithm}), we only show how to compute $D^*$.
Recall $D^*$ is a homogeneous polynomial in $v_0, v_1, v_2, v_3$ with total degree $4$ and the degrees of $D^*$ with respect to $v_0, v_1, v_2, v_3$ are $2, 4, 3, 2$.

First, replace all but two parameters with general numbers, e.g. $v_2=3$ and $v_3=5$, and we will compute $D^*(v_0, v_1, 3, 5)$. 
We may assume $D^*(v_0, v_1, 3, 5)$ has the following~form
$$D^*(v_0, v_1, 3, 5)=v_1^4 + C_1v_1^3+C_2v_1^2+C_3v_1+C_4,$$ where 
$C_i$ $(i=1, 2, 3, 4)$ is a univariate polynomial in $v_0$.
The degree of $C_i$ is bounded above by $i$  and the degree of $D^*$ with respect to  $v_0$. 
For instance, the degree of $C_4$ is at most $2$, so we may assume 
$$C_4=C_{40}v_0^2+C_{41}v_0+C_{42},$$
 where the coefficients $C_{40}, C_{41}, C_{42}$ are constants because $v_2$ and $v_3$ have been fixed.  
We determine these coefficients by interpolation. 
Now, we continue to have $v_2=3$,$v_3=5$, and also fix $v_0$ to be a general number, e.g. $v_0=2$, to determine the elimination ideal 
\begin{center}
$\langle F(2, v_1, 3, 5, p), J(2, v_1, 3, 5, p)\rangle\cap {\mathbb Q}[v_1]=\langle${$v_1^4-68v_1^3+\frac{5733}{4}v_1^2-\frac{36801}{4}v_1+17604$}
$\rangle$.
\end{center}
By  Corollary \ref{csample}, 
\begin{center}
$D^*(2, v_1, 3, 5)=${$v_1^4-68v_1^3+\frac{5733}{4}v_1^2-\frac{36801}{4}v_1+17604$}.
\end{center}
Similarly, by fixing $v_0$ to be  different values, we also determine
\begin{center}
$D^*(7, v_1, 3, 5)=${$v_1^4-93v_1^3+\frac{6219}{2}v_1^2-\frac{174231}{4}v_1+\frac{837081}{4}$}\\
$D^*(9, v_1, 3, 5)=${$v_1^4-103v_1^3+\frac{7749}{2}v_1^2-\frac{248103}{4}v_1+\frac{1379997}{4}$.}
\end{center}
These $3$ univariate polynomials induce linear constraints on the coefficients $C_{40},C_{41},C_{42}:$
\begin{small}
\[
2^2C_{40}+2C_{41}+C_{42}={17604},\quad
7^2C_{40}+7C_{41}+C_{42}={\frac{837081}{4}},\quad
~9^2C_{40}+9C_{41}+C_{42}={\frac{1379997}{4}}.
\]
\end{small}
%
Solving this linear system, we find 
$C_4=\frac{16875}{4}v_0^2+\frac{729}{2}v_0.$

Similarly, we compute $C_1, C_2, C_3$ and get $D^*(v_0, v_1, 3, 5)$:
\noindent
{$v_1^4+\frac{27}{4}v_0^2v_1^2-\frac{675}{2}v_0^2v_1+\frac{16875}{4}v_0^2-5v_0v_1^3+\frac{549}{2}v_0v_1^2
-{3834}v_0v_1+\frac{729}{2}v_0-58v_1^3+\frac{3429}{4}v_1^2-\frac{729}{4}v_1$.}

 Second, we compute $D^*(v_0, v_1, v_2, 5)$. According to the monomials of $D^*(v_0, v_1, 3, 5)$, we may assume $D^*(v_0,v_1,v_2,5)$ has the following form:
$${v_1^4+C_1v_0^2v_1^2+C_2v_0^2v_1+C_3v_0^2+C_4v_0v_1^3+\cdots
+C_6v_0v_1+C_7v_0+C_8v_1^3+C_9v_1^2+C_{10}v_1,}$$
\noindent
 where
$C_i$ $(i=1, \ldots, 10)$ is a polynomial in $v_2$. 
The degree of $C_i$ is bounded above by the degree of $D^*$ with respect to  $v_2$ and 
the difference 
${\tt deg}(D^*)-{\tt deg}(T_i))$ where 
$T_i$ corresponds to the term of $C_i$ in $D^*(v_0, v_1, v_2, 5)$.
For instance, the degree of $C_{10}$ is $3$, so 
$$C_{10}=C_{100}v_2^3+C_{101}v_2^2+C_{102}v_2+C_{103}.$$ Now we solve the coefficients $C_{100}, C_{101}, C_{102}, C_{103}$.
For general specializations $v_2=11, 21, 4$, we repeat the first step and compute

$D^*(v_0, v_1, 11, 5)=${$\frac{27}{4}v_0^2v_1^2-\frac{675}{2}v_0^2v_1+\frac{16875}{4}v_0^2-5v_0v_1^3+\frac{513}{2}v_0v_1^2
-{3996}v_0v_1+\frac{35937}{2}v_0+v_1^4-46v_1^3+\frac{3501}{4}v_1^2-\frac{35937}{4}v_1$. }

$D^*(v_0, v_1, 21, 5)=${$\frac{27}{4}v_0^2v_1^2-\frac{675}{2}v_0^2v_1+\frac{16875}{4}v_0^2-5v_0v_1^3+234v_0v_1^2
-\frac{13757}{2}v_0v_1-\frac{250047}{2}v_0+v_1^4-31v_1^3+\frac{4023}{2}v_1^2-\frac{250047}{4}v_1$. }

$D^*(v_0, v_1, 4, 5)=${$\frac{27}{4}v_0^2v_1^2-\frac{675}{2}v_0^2v_1+\frac{16875}{4}v_0^2-5v_0v_1^3-\frac{1089}{4}v_0v_1^2
-\frac{15039}{4}v_0v_1-864v_0+v_1^4-\frac{113}{2}v_1^3+\frac{13059}{16}v_1^2-432v_1$. }

These bivariate polynomials induce linear constraints on  $C_{100},C_{101},C_{102},C_{103}$:
\begin{center}
$3^3C_{100}+3^2C_{101}+3C_{102}+C_{103}=${\footnotesize$-\frac{729}{4}$}
\end{center}
\begin{center}
$11^3C_{100}+11^2C_{101}+11C_{102}+C_{103}=${\footnotesize$-\frac{35937}{4}$}
\end{center}
\begin{center}
$21^3C_{100}+21^2C_{101}+21C_{102}+C_{103}=${\footnotesize$-\frac{250047}{4}$}
\end{center}
\begin{center}
$4^3C_{100}+4^2C_{101}+4C_{102}+C_{103}=${\footnotesize$-432.$}
\end{center}
Solving this system determines $C_{10}$. 
Similarly, we find $C_1, \ldots, C_9$ to
get $D^*(v_0, v_1, v_2, 5)$:

\noindent
{$\frac{27}{4}v_0^2v_1^2-\frac{675}{2}v_0^2v_1+\frac{16875}{4}v_0^2-5v_0v_1^3-\frac{9}{4}v_0v_1^2v_2+\frac{1125}{4}v_0v_1^2-\frac{27}{2}v_0v_1v_2^2+\frac{675}{4}v_0v_1v_2$}
{$-\frac{16875}{4}v_0v_1+\frac{27}{2}v_0v_2^3+v_1^4+\frac{3}{2}v_1^3v_2-\frac{125}{2}v_1^3+\frac{99}{16}v_1^2v_2^2-\frac{675}{8}v_1^2v_2
+\frac{16875}{16}v_1^2-\frac{27}{4}v_1v_2^3$. }

Third, it is easy to recover $D^*$ from $D^*(v_0, v_1, v_2, 5)$ 
because we know $D^*$ is homogeneous. 
For instance, the second term in
$D^*(v_0, v_1, v_2, 5)$ is $-\frac{675}{2}v_0^2v_1$. The total degree of this term is $3$. So the corresponding term in $D^*$ is  
$\frac{-\frac{675}{2}v_0^2v_1v_3}{5}=-\frac{135}{2}v_0^2v_1v_3$. Similarly, we recover all terms of $D^*$ and again get \eqref{Dexample}.



 \end{example}
 
From the timings in the columns ``Algorithm \ref{interpolation}. Strategy 2" of Tables \ref{table1dense}--\ref{literatureOld}, we see Strategy 2 improves the efficiency for the larger examples, e.g. the  Dense Models 3--4 and Models \ref{ex:l3}--\ref{ex:l4}. The advantage of Strategy 2 compared to Strategy 1
is that we only need to solve a small-sized (bounded by the degree of $\DD_J$ with respect to each parameter at each step)
linear equation system in the lifting step. Therefore, the most time consuming part of Strategy 2 is sampling. As soon as we can compute the degree of $\DD_J$, 
we can estimate the computational timings of Strategy 2. In fact,
we can first check the approximate timing of sampling once by running Algorithm \ref{sample} once, say $T_s$. Suppose the total degree of $\DD_J$ is $d$ and the degrees with respect to  $u_0, \ldots, u_n$ are $d_0, \ldots, d_n$.  Without loss of generality, assume $d=d_0$ and $d_0\geq d_1\geq \cdots \geq d_n$. Then the total approximate sampling timing of Strategy 2 is 
\begin{equation}\label{et}
T_sd_2 \cdots d_n.
\end{equation}
 So if we improve $T_s$, we might improve the efficiency.  This motivates Strategy 3.  
  
 We give another two sub-algorithms Algorithm \ref{samples3} and Algorithm \ref{degrees3} to do sampling for interpolation and to compute the degree of $\DD_J$, respectively.  More precisely,  Strategy 3 means to use Algorithms \ref{samples3} and \ref{degrees3} instead of Algorithms \ref{sample} and \ref{degree} in Algorithm \ref{interpolation}. The correctness of Algorithms \ref{samples3} and
  \ref{degrees3} directly follows from Lemma \ref{normal}.
  We give Example \ref{ex:strategy3} to show how Algorithm \ref{degrees3} works
  (and similarly this example shows how Algorithm \ref{samples3} works). 

\begin{example}[Toy Example for Strategy 3]\label{ex:strategy3}
Here, we illustrate Algorithm \ref{degrees3} 
which is used to compute the degree of the discriminant. 
Here, we compute the total degree of $\DD_J$ to be $4$ of the linear model in Example \ref{linearmodel} by Algorithm \ref{degrees3}.  
Define $F_0, F_1, \ldots, F_5$ as those in Example \ref{ex:introduction}. 
Rename $p_0, \ldots, p_3, \lambda_1, \lambda_2$ as $y_0, \ldots, y_3, y_4, y_5$. 
The determinant of Jacobian matrix of $F_0, F_1, \ldots, F_5$ with respect to  $y_0, \ldots, y_3, y_4, y_5$, say $J$, is~
$$
\begin{array}{c}
- \lambda_1^2 p_0 p_1-4  \lambda_1^2 p_0 p_2-25  \lambda_1^2 p_0 p_3- \lambda_1^2 p_1 p_2-36  \lambda_1^2 p_1 p_3-49  \lambda_1^2 
p_2 p_3+ \lambda_1  \lambda_2 p_0 p_1
\\+8  \lambda_1  \lambda_2 p_0 p_2-125  \lambda_1  \lambda_2 p_0 p_3+3  \lambda_1  \lambda_2 p_1 p_2-144  \lambda_1  \lambda_2 p_1 p_3
-147  \lambda_1 \lambda_2p_2p_3
\\+12 \lambda_2^2p_0p_1+32 \lambda_2^2p_0p_2-150 \lambda_2^2p_0p_3+4 \lambda_2^2p_1p_2-108 \lambda_2
^2p_1p_3-98 \lambda_2^2p_2p_3.
\end{array}
$$

First, we restrict the parameter space to a general line by the substitution $u_i=\square t+\square$, e.g.
 $u_0=t+13, u_1=4t+2, u_2=9t+6, u_3=31t+5$, into $F_0, F_1, \ldots, F_5$. 
Next, we eliminate all but one unknown;
 we compute the radical of the elimination ideal $\langle F_0(t, y_0, \ldots, y_6), \ldots, F_5(t, y_0, \ldots, y_6)\rangle \cap {\mathbb Q}[t, y_0]$, say $\langle g\rangle$ where $g$ is  
$$\begin{array}{c}
20250y_0^3t^2+23400y_0^3t-22770y_0^2t^2+6760y_0^3-45961y_0^2t+\\1488
y_0t^2-18954y_0^2+20582y_0t-24t^2+16094y_0-624t-4056.
\end{array}
$$
Computing the resultant   ${\tt resultant}(g, \frac{\partial g}{\partial y_0}, y_0)$, we get $(t+13)(t+\frac{45}{26})G$,
where
$$\begin{array}{c}
G=t^4+\frac{34880663}{24297438}t^3 + \frac{1144507225}{1166277024}t^2 + \frac{391327027}{1166277024}t + \frac{244617385}{4665108096}.\end{array}
$$
By Lemma \ref{normal}, we know $\DD_J(t+13, 4t+2, 9t+6, 31t+5)$ is a factor of ${\tt resultant}(g, \frac{\partial g}{\partial y_0}, y_0)$.  For the three irreducible factors of ${\tt resultant}(g, \frac{\partial g}{\partial y_0}, y_0)$, we check by computing elimination ideal that 
\begin{center}
{$\langle F_0(t, y_0, \ldots, y_6), \ldots, F_5(t, y_0, \ldots, y_6), J, t+13\rangle \cap {\mathbb Q}[t]=\langle 1\rangle$}
\end{center}
\begin{center}
{$\langle F_0(t, y_0, \ldots, y_6), \ldots, F_5(t, y_0, \ldots, y_6), J, t+\frac{45}{26}\rangle \cap {\mathbb Q}[t]=\langle 1\rangle$}
\end{center}
\begin{center}
{$\langle F_0(t, y_0, \ldots, y_6), \ldots, F_5(t, y_0, \ldots, y_6), J, G\rangle \cap {\mathbb Q}[t]=\langle G\rangle$.}
\end{center}
Therefore we confirm $\DD_J(t+13, 4t+2, 9t+6, 31t+5)$ is $G$, which is consistent with the result computed by Algorithm \ref{degree}. Again, we conclude the total degree of $\DD_J$ is $4$.  

In  Algorithm \ref{degree} we are eliminating all unknowns but $t$ in 
$\langle F_0, \ldots, F_m, J\rangle$.
While in Algorithm \ref{degrees3} we are eliminating all unknowns but $t, y_0$ in  $\langle F_0, \ldots, F_m\rangle$ and carrying out many other eliminations to $t$ in ideals $\langle F_0, \ldots, F_m, J, G_i\rangle$. 
It is not immediately clear why Algorithm \ref{degrees3} would be an improvement over Algorithm \ref{degree}.  In fact, for the larger algebraic models, $J$ will be a huge polynomial. In this case, eliminating from 
 $\langle F_0, \ldots, F_m, J\rangle$ can be very slow. 
 Either eliminating from $\langle F_0, \ldots, F_m\rangle$  or eliminating from 
 $\langle F_0, \ldots, F_m, J, G_i\rangle$ can be faster because either removing huge polynomials or including additonal small polynomials might speed up elimination according to experience. Of course it is difficult to prove theoretically which strategy will be more efficient, the best way is to implement both strategies and run the benchmarks.  In our case, Algorithm \ref{degrees3} indeed gives improvement for some larger models, see  Models \ref{ex:l4}, \ref{ex:l6}, \ref{ex:l7}, \ref{ex:l9} in Table \ref{literatureNew}, Section \ref{experiment}.

\end{example}
\begin{algorithm}\label{samples3}
\scriptsize
\DontPrintSemicolon
\LinesNumbered
\SetKwInOut{Input}{input}
\SetKwInOut{Output}{output}
\Input{  $F_0, \ldots,F_{m}, J$ and integers $b_1, \ldots, b_n$}
\Output{ $\DD_J(u_0, b_1, \ldots, b_n)$}
\For {$i$ {\bf from} $1$ {\bf to} $n$} {
$F_i^*\leftarrow$ replace $u_i$ in $F_k$ with $b_i$\; 
}
$A_1(u_0, y_0)\leftarrow$ generator of the radical of elimination ideal $\langle F_0^*, \ldots,F_n^*, F_{n+1}, \ldots F_{m}\rangle\cap {\mathbb Q}[u_0, y_0]$\nllabel{elim61}\;
\If{${\tt degree}(A_1, y_0)=ML\;Degree$\nllabel{if5}}
{$A(u_0)\leftarrow$ {\tt resultant}$(A_1, \frac{\partial A_1}{\partial y_0}, y_0)$\;}
 $newA\leftarrow 1$\;
  \For{each irreducible factor $G(u_0)$ of $A(u_0)$}{\If{$\langle F_0^*, \ldots,F_n^*, F_{n+1}, \ldots F_{m}, J, G(u_0)\rangle\cap {\mathbb Q}[u_0]\neq \langle 1\rangle$}{$newA\leftarrow newA\cdot G(u_0)$}}
  {\bf return} $newA$\;
\caption{IntersectStrategy3}
\end{algorithm}

 \begin{algorithm}\label{degrees3}
\scriptsize
\DontPrintSemicolon
\LinesNumbered
\SetKwInOut{Input}{input}
\SetKwInOut{Output}{output}
\Input{  $F_0, \ldots,F_{m}, J$}
\Output{ $d, d_0, \ldots d_n$, where $d$ is the total degree of  $\DD_J$ and $d_i$ is the degree of $\DD_J$ {\it w.r.t} each $u_i$ $(i=0, \ldots, n)$}
\For {$i$ {\bf from} $0$ {\bf to} $n$} {

$F_0^*, \ldots, F_n^*\leftarrow$ replace $u_0, \ldots, u_{i-1}, u_{i+1}, \ldots, u_n$ in $F_0, \ldots, F_n$ with random integers\; 
$g(u_i, y_0)\leftarrow$ generator of the  radical of elimination ideal $\langle F_0^*, \ldots,F_n^*, F_{n+1}, \ldots F_{m}\rangle\cap {\mathbb Q}[u_i, y_0]$\nllabel{elim71}\;
\If{${\tt degree}(g, y_0)=ML\;Degree$\nllabel{if1}
}
{$A(u_i)\leftarrow$ {\tt resultant}$(g, \frac{\partial g}{\partial y_0}, y_0)$\;
}
  $newA\leftarrow 1$\;
  \For{each irreducible factor $G(u_i)$ of $A(u_i)$}{\If{$\langle F_0^*, \ldots,F_n^*, F_{n+1}, \ldots F_{m}, J, G(u_i)\rangle\cap {\mathbb Q}[u_i]\neq \langle 1\rangle$}{$newA\leftarrow newA\cdot G(u_i)$}}
 $d_i\leftarrow$ degree of $newA$\;
 $a_i,  b_i\leftarrow$ random integers\;
 }
$F_0(t), \ldots, F_n(t)\leftarrow$ replace $u_0, \ldots, u_n$ with $a_0\cdot t + b_0, \ldots, a_n\cdot t+b_n$ in $F_0, \ldots, F_n$\; 
$g(t, y_0)\leftarrow$ generator of the radical of elimination ideal $\langle F_0(t), \ldots,F_n(t), F_{n+1}, \ldots F_{m}\rangle\cap {\mathbb Q}[t, y_0]$\nllabel{elim72}\;
\If{${\tt degree}(g, y_0)=ML\;Degree$\nllabel{if2}} 
{
$A(t)\leftarrow$ {\tt resultant}$(g, \frac{\partial g}{\partial y_0}, y_0)$\;
}
 $newA\leftarrow 1$\;
  \For{each irreducible factor $G(t)$ of $A(t)$}{\If{$\langle F_0^*, \ldots,F_n^*, F_{n+1}, \ldots F_{m}, J, G(t)\rangle\cap {\mathbb Q}[t]\neq \langle 1\rangle$}{$newA\leftarrow newA\cdot G(t)$}}
 $d\leftarrow$ degree of $newA$\;
 {\bf return} $d, d_0, \ldots, d_n$
\caption{DegreeStrategy3}
\end{algorithm}

\begin{remark}
Strategy 3 is incomplete whenever the solutions to the likelihood equations along the general line do not have distinct $y_0$ coordinates. 
This means the ``if'' condition in Algorithm \ref{samples3}--Line \ref{if5} or in Algorithm \ref{degrees3}--Lines \ref{if1}, \ref{if2} is not satisfied. 
In principle, we can modify Strategy 3 by applying a general linear coordinate change to
$y_0, \ldots, y_m$. 
\end{remark}

\section{Implementation Details and Experimental Timings}\label{experiment}

We have implemented the probabilistic method, Algorithm \ref{interpolation}  in {\tt Maple 2015}. 
For comparisons, we have also implemented the standard Algorithm \ref{dxj} in {\tt Maple 2015}. 
We run our implementations of Algorithms \ref{dxj} and \ref{interpolation} for many examples to set benchmarks by a 3.2 GHz Inter Core i5 processor (8GB total memory) under OS X 10.9.3\footnote{See our {\tt Maple} code on the website: https://sites.google.com/site/rootclassificaiton/publications/jsc2016}. We record the timings for these examples in Tables \ref{table1dense}--\ref{literatureNew}\footnote{When a computation returned no output in $2$ hours we record ``$>$2h''}.
In this section, we give conclusions from our computational experiments and results. 




The probabilistic algorithm is implemented in three different strategies described in  Section \ref{strategies};
they are the `interpolate every term at once' strategy; `interpolate one parameter at a time' strategy; and `one variable' strategy respectively. 

 There are two kinds of benchmarks, the random models and literature models. 
 The random models, say Dense Models 1--6, are dense homogeneous polynomials generated by the {\tt Maple} command
\[{\tt randpoly}([p_0, \ldots, p_n], {\tt dense}, {\tt degree}=m),\;\;\; n=2, 3;\; m=2, 3, 4.\]
 Models \ref{ex:l1}--\ref{ex:l9} are examples presented in the literatures \cite{SAB2005, DSS2009, EJ2014}.  
See the appendix for the literature models' defining equations\footnote{The dense models are available at  https://sites.google.com/site/rootclassificaiton/publications/jsc2016}.



The most important conclusion is that our method with new implementation based on {\tt Maple 2015} and {\tt FGb} is more efficient than our implementation in  \cite{RT2015}. 
For the random models, in \cite[Table 1]{RT2015}, our previous implementation of Algorithm \ref{interpolation} was only able to compute random dense models with $n+1=3$ and degree of $X\leq 3$. 
With the new implementation, we have dramatic speed ups as shown by the second and third row of  Table \ref{table1dense}. 
Furthermore, we can compute larger dense models, see the fourth and fifth row of the Table \ref{table1dense}. 

In addition, we see in Table \ref{literatureOld} that the new implementation has improved performance for literature models as well.
Models \ref{ex:l1}--\ref{ex:l4} are exactly the same four literature models in \cite[Table 2]{RT2015}.
For instance, for Model \ref{ex:l4} \cite[Example 6]{RT2015}, the old implementation of Strategy 2 spent 30 days while the new implementation of Strategy 2 takes less than 30 minutes. 

The probabilistic algorithm performs better for the larger sized models 
(Dense Models 3--4 , Models \ref{ex:l3}--\ref{ex:l4}) 
than the standard method (the standard method performed better for the smaller models).
For Models \ref{ex:l4}--\ref{ex:l9}, while using the standard algorithm our computer reaches its memory limits after running for 2--3 hours. 
One explanation for this is seen in 
Table \ref{tableM}. We see the intermediate matrices generated by Algorithm \ref{dxj} using  {\tt FGb}  are quite large. 

For most larger literature models, the new Strategy 3 improves the sampling timings compared to Strategy 2  (see  Models \ref{ex:l4}, \ref{ex:l6}, \ref{ex:l7}, \ref{ex:l9} in Table \ref{literatureNew}). 

It is a challenging task to determine the discriminants for  Models \ref{ex:l5}--\ref{ex:l9} by the probabilistic algorithm proposed in this paper (interpolation based on linear lifting). 
One way to improve the methods is to take advantage of sparsity of the monomial support of the discriminant. 
For example, the discriminant for Model $4$ has only $1307$ terms out of the possible $6188$.

Another important take away from the tables is that the probabilistic algorithm with Strategy 1 is comparable to the standard elimination method in Table \ref{literatureOld}. We only receive great improvements when employing Strategies $2$ or $3$. 





\begin{example}As a guide to reading the tables, we summarize the first row of Table \ref{table1dense}.
The model invariant of Dense Model 1 in Table \ref{table1dense} is 
$$-37p_0^2-68p_0p_1-64p_0p_2+26p_1^2+18p_1p_2+20p_2^2.$$
We see the number of probability variables ($p_0, p_1, p_2$) is $3$ and the total degree of this model invariant is $2$. 
It is straightforward to check the ML degree
is $6$. 
By Algorithm \ref{degree} or \ref{degrees3}, we compute the total degree of $\DD_J$ is $10$. Note that when we could not compute the total degree of $\DD_J$
 we record ``{\bf ?}''. 
  \end{example}
   
 \begin{table}[h]
\tiny
\centering
{\footnotesize\caption{Computing $\DD_{J}$ for random dense models (s: seconds; h: hours).}\label{table1dense}}
\begin{tabular}{|c|c|c|c|c|c|c|c|}
\hline 
\multirow{2}{*}{Dense Models}& \multirow{2}{*}{$\# p_i$} & \multirow{2}{*}{degree  $X$} & \multirow{2}{*}{ML degree $X$} & \multirow{2}{*}{degree   $D_{X_{J}}$}  & Algorithm \ref{dxj} & \multicolumn{2}{c|}{Algorithm \ref{interpolation} (Interpolation)}\tabularnewline
 \cline{7-8} 
   &            &  & &  & (Elimination) & Strategy 1  & Strategy 2   \tabularnewline
\hline 
Dense Model 1&3& 2  & 6&10&
(4.18s) 
0.04s &
(.76s) 
0.4s  &
(.61s) 
0.4s \\ \hline
 Dense Model 2&3& 3 & 12& 24&1.2s&
(1107s) 
 2.9s  &
 (1180s) 
 2.5s \\\hline
 Dense Model 3&3 & 4&20 &44 &46.4s & 27.3s & 21.8s\\ \hline
 Dense Model 4&4&  2 &14 & 34 & 3394.6s&  \bf{$>$2h}  & 252.0s\\\hline
Dense Model 5&4&  3 & 39& 114 &\bf{$>$2h}& \bf{$>$2h}   & \bf{$>$2h}\\ \hline
Dense Model 6&4&  4 & 84& \bf{?} &\bf{$>$2h} & \bf{$>$2h} & \bf{$>$2h}  \\ \hline
\end{tabular}
\\
{Table \ref{table1dense} provides the timings of Algorithm \ref{dxj} and Algorithm \ref{interpolation} (with Strategies 1--2) for Dense Models 1--6.  
The columns ``$\# p_i$'', ``degree of $X$'', ``ML degree $X$'' and ``degree  $\DD_J$'' give 
the model information: number of probability variables, total degree of the model invariant, ML degree, and total degree of $\DD_J$ respectively. 
Timings in parenthesis are the average timings from our previous implementation. 
}
\end{table}

\begin{table}[h]
\tiny
\centering
\caption{Computing $\DD_{J}$ for literature models (s: seconds; h: hours; d: days)}\label{literatureOld}
\begin{tabular}{|c|c|c|c|c|c|c|} \hline
\multirow{2}{*}{Models}&\multirow{2}{*}{ML degree $X$}&\multirow{2}{*}{degree  $\DD_J$}& Algorithm \ref{dxj}&
\multicolumn{2}{|c|}{Algorithm \ref{interpolation} (Interpolation)}\\
\cline{5-6} 
 &&&(Elimination)&Strategy 1&Strategy 2\\ \hline
  Model \ref{ex:l1} &3&6& (11.1s)  0.23s &(5.3s) 1.4s & (6.4s) 1.7s\\\hline
  Model \ref{ex:l2} &2&4&(36446s)  4.4s &(360s) 2.4s &(56.3s)  3.3s\\\hline
  Model \ref{ex:l3} &4&14&($>$16h) 4242.3s &($>$16h)  \bf{$>$2h} &(2768s) 370.3s \\ \hline
  Model \ref{ex:l4} &6&12&($>$12d) \bf{$>$2h} &($>$30d) \bf{$>$2h} &(30d) 1501.0s \\ \hline
\end{tabular}
\\
Table \ref{literatureOld} provides the timings of Algorithm \ref{dxj} and Algorithm \ref{interpolation} (with Strategies 1--2) for Models \ref{ex:l1}--\ref{ex:l4}.
Timings in parenthesis are from our previous implementation. 
\end{table}

\begin{table}[h]
\tiny
\centering
\caption{Comparing Strategy 2  and Strategy 3 for literature models (s: seconds; d: days)}\label{literatureNew}
\begin{tabular}{|c|c|c|c|c|c|c|c|} \hline
\multirow{2}{*}{Models}&\multirow{2}{*}{ML degree $X$}&\multirow{2}{*}{degree  $\DD_J$}& \multicolumn{2}{|c|}{Sampling Timing}&\multicolumn{2}{|c|}{Total Timing}\\ 
\cline{4-7} 
&&&Strategy 2 (Algorithm \ref{sample}) & Strategy 3 (Algorithm \ref{samples3}) &Strategy 2&Strategy 3\\ \hline
 Model \ref{ex:l1} &3&6&   0.05s & 0.05s&1.7s&1.7s\\ \hline
   Model  \ref{ex:l2} &2&4&   0.04s & 0.05s&3.3s&3.9s\\ \hline
  Model \ref{ex:l3} &4&14&    0.05s & 0.06s&370.3s&438.8s\\ \hline
   Model \ref{ex:l4} &6&12&  0.8s & 0.3s&1501.0s&804.8s\\ \hline
    Model \ref{ex:l5} &12&48&   2.0s & 4.8s&$>${\it {1d}} &  $>${\it{2d}}\\ \hline
   Model  \ref{ex:l6} &10&34&   62.1s & 6.8s & $>${\it{111178d}} &$>${\it{13374d}} \\ \hline
   Model \ref{ex:l7} &10&34&    65.0s & 6.9s&$>${\it{144960d}} &  $>${\it{9203d}}\\ \hline
   Model \ref{ex:l8} &23&102&  26.8s & 53.4s&$>${\it{123d}} &$>${\it{330d}}\\ \hline
   Model \ref{ex:l9} &14&70&    294.3s & 122.2s& $>${\it{7071130d}}  &$>${\it{2637947d}}\\ \hline
\end{tabular}
\\Table \ref{literatureNew} compares Strategy 2 and Strategy 3 for Models \ref{ex:l1}--\ref{ex:l9}.  The column ``Sampling Timing'' gives the timings for doing sample once. The column ``Total Timing'' gives the timings for computing $\DD_J$.  
The italics font timing means the computation did not finish in 2 hours, but we can estimate the sampling timing providing a lower bound (Example \ref{ex:Timing}).
\end{table}

\begin{example}\label{ex:Timing}
 The  lower bound of the computational timings  are approximated by the times in italics. 
For instance,  consider Model 5 in Table \ref{literatureNew}. There are $5$ parameters $u_0, u_1, u_2, u_3, u_4$. By Algorithm \ref{degree}, the total degree of $\DD_J$ is $48$ and the degrees with respect to 
the five parameters are: $31$, $44$, $48$, $44$ and $31$.  We check the timing for doing sample one time by Algorithm \ref{sample} is $2.0s$. Then we estimate the timing for sampling of Strategy 2 is $2.0s\times 44 \times 31 \times 31 \approx 23.49h$ $(1d)$ by formula (\ref{et}) in Section \ref{strategies}. Similarly, we check  the timing for doing sample one time by Algorithm \ref{samples3} is $4.8s$ and estimate the total timing for sampling of Strategy 3 is $4.8s\times 44 \times 31 \times 31 \approx 56.37h$ $(2d)$. 
\end{example}

\begin{remark}
Whenever we compute an elimination ideal we 
use the  {\tt FGb} command {\tt fgb\_gbasis\_elim}. This is done 
in  Algorithm \ref{dxj}-Line \ref{elim21},  Algorithm \ref{sample}-Line \ref{elim31}, Algorithm \ref{degree}-Lines \ref{elim51}, \ref{elim52}, Algorithm \ref{samples3}-Line \ref{elim61}, and Algorithm \ref{degrees3}-Lines \ref{elim71}, \ref{elim72}.
\end{remark}

\begin{table}[h]
\tiny
\centering
\caption{Largest intermediate matrix size after running algorithm \ref{dxj} for 2 hours}\label{tableM}
\begin{tabular}{|c|c|c|c|c|c|c|} \hline
 Model \ref{ex:l4}& Model \ref{ex:l5}& Model \ref{ex:l6}& Model \ref{ex:l7}& Model \ref{ex:l8}& Model \ref{ex:l9}\\\hline    
 $4213938\times 5121243$  &$446567\times 635260$  &$3193378\times 5474449$  & $4314279\times 7141576$ & $1990207\times 2328577$ &$3835608\times 5561055$ \\\hline
\end{tabular}
\end{table}









\section{Future Work and Final Application}\label{rrc}
\subsection{Next steps} 

For our future work we want to determine the discriminants for larger sized problem such as Models \ref{ex:l5}--\ref{ex:l9}. 
The interpolation method discussed in this paper is basically linear lifting interpolation. Section \ref{strategies}. Formula (\ref{et}) gives the timing estimation for this linear lifting method. Although Strategy 3  improves 
$T_s$ in formula (\ref{et}), the interpolation is still expensive (see Table \ref{literatureNew}) since $d_2\cdots d_n$ is quite a large number for these models. In order to overcome this problem, one possible way is to try Newton--Hensel lifting \cite{GLS2000, Schost2003}.

We also want to develop  heuristics for checking the output of probabilistic algorithm.  For all examples we have tried in Section \ref{experiment}, the computational results of probabilistic algorithm are the same with the results computed by the standard algorithm as soon as both of  
algorithms finish the computation. For some examples such as Model \ref{ex:l4}, the standard algorithm gives no output in a reasonable time. 
However, there are ways to check the output of probabilistic algorithm. 
The first way to check is to simply run the algorithm a second time. 
The second way is to evaluate the the discriminant on a random line. This univariate polynomial should agree by doing an elimination problem that replaces the data with the random line. 
The third way to check is to find a point on the discriminant. Solving the likelihood equations for this special point, we are able to check if solutions coincide by brute force. Moreover, if the discriminant has no rational points, then one could try to use numerical homotopy continuation methods instead.

The next step in this work is the complete positive root classification. 
The discriminant is necessary, but usually not enough for classifying positive solutions (see the next subsection regarding Model \ref{ex:l4}). 
 In order to complete positive root classification, we have to consider the {\em discriminant sequence} \cite{Yang1999}.  



\subsection{Final Application}
We end the paper with the discussion of real root classification on the $3\times 3$ symmetric matrix model (Model \ref{ex:l4}).

Consider  the following story with dice. A gambler has a coin,
 and two pairs of three-sided dice. The coin and the dice are all unfair. However, the two dice in the same pair have the same weight.  He plays the same game $1000$ rounds. In each round, he first tosses the coin.  If the coin lands on side $1$,  he tosses the first pair of dice. If the coin lands on side $2$, he tosses the second pair of dice. After the $1000$ rounds,  he records a $3\times 3$ data matrix $[\overline{u}_{ij}]$ $(i, j=1, 2, 3)$ where $\overline{u}_{ij}$ is the 
 the number of times for him to get the sides $i$ and $j$ with respect to the two dice.  By the matrix $[\overline{u}_{ij}]$, he is trying to estimate the probability $\overline{p}_{ij}$ of getting the sides $i$ and $j$ with respect to the two dice. 
 
It is easy to check that the matrix 
 {\footnotesize\begin{align*}
\left[
\begin{array}{ccc}   
    \overline{p}_{11} &    \overline{p}_{12}    & \overline{p}_{13} \\   
    \overline{p}_{21} &    \overline{p}_{22}   & \overline{p}_{23}\\   
    \overline{p}_{31} &   \overline{p}_{32} &  \overline{p}_{33} 
\end{array}
\right]
\end{align*}}%
\noindent
is symmetric and has at most rank $2$.
Let 
{\footnotesize\begin{equation*}
p_{ij}=
\begin{cases}
\overline{p}_{ij} & i=j\\
\frac{1}{2}\overline{p}_{ij} & i<j
\end{cases}, \;\;
u_{ij}=
\begin{cases}
\overline{u}_{ij} & i=j\\
\overline{u}_{ij}+\overline{u}_{ji} & i<j
\end{cases}.
\end{equation*}}%
\noindent
We have an algebraic statistical model 
$\cM={\mathcal V}(g)\cap \Delta_5,$ whose  Zariski closure is $X$~where 
{\footnotesize $$\begin{array}{c}
g=\det\left[
\begin{array}{cccc}   
    2p_{11} &    p_{12}    & p_{13} \\   
    p_{12} &    2p_{22}   & p_{23}\\   
    p_{13} & p_{23} & 2p_{33} 
\end{array}
\right] \text{ and }
\Delta_{5}=\{(p_{11},\ldots,p_{33})\in {\mathbb R}_{>0}^{6}|p_{11}+ p_{12}+ p_{13}+ p_{22}+ p_{23}+ p_{33}=1\}.
\end{array}$$}

According to the Definition 2, we present the Langrange likelihood equations: 
{\begin{align*}
F_0&=p_{11}\lambda_1+(8p_{22}p_{33}-2p_{23}^2)p_{11}\lambda_2-  u_{11}\\
F_1&=p_{12}\lambda_1+(2p_{13}p_{23}-4p_{12}p_{33})p_{12}\lambda_2 -  u_{12}\\
F_2&=p_{13}\lambda_1+(2p_{12}p_{23}-4p_{13}p_{22})p_{13}\lambda_2-  u_{13}\\
F_3&=p_{22}\lambda_1+(8p_{11}p_{33}-2p_{13}^2)p_{22}\lambda_2-   u_{22}\\
F_4&=p_{23}\lambda_1+(2p_{12}p_{13}-4p_{11}p_{23})p_{23}\lambda_2- u_{23}\\
F_5&=p_{33}\lambda_1+(8p_{11}p_{22}-2p_{12}^2)p_{33}\lambda_2- u_{33}\\
F_6&=g(p_{11}, p_{12}, p_{13}, p_{22}, p_{23}, p_{33})\\
F_7&=p_{11} + p_{12} +p_{13}+p_{22}+p_{23}+p_{33}-1
\end{align*}}%
where the unknowns are $p_{11}, p_{12}, p_{13}, p_{22}, p_{23}, p_{33}, \lambda_1,\lambda_{2}$ and parameters are  
$u_{11}$, $u_{12}$, $u_{13}$, $u_{22}$, $u_{23},u_{33}$.
 
We have $8$ equations in $8$ unknowns with $6$ parameters and the ML degree is $6$  \cite{SAB2005}. 
 By the Algorithm \ref{interpolation}, we have computed $\DD_J$, which has $1307$ terms with total degree $12$. 
By a similar computation, we get $\DD_\infty$\footnote{\scriptsize See  $\DD$
 on the website: https://sites.google.com/site/rootclassificaiton/publications/DD}whose factors are  $g(u_{11}, \ldots, u_{33})$
or  positive when each $u_i$ is~positive.  

For the data-discriminant $\DD$ we have computed above, we have also computed\footnote{\scriptsize The sample points were first successfully computed by one of the 
ISSAC'15 anonymous referees.} at least one rational point (sample point) from each open connected component of $\DD\neq 0$ using  {\tt RAGlib}\cite{SS2003, HS2012, GS2014}.  
With these sample points, we can solve the real root classification problem on the open cells. 
By testing all $236$ sample points, we see that  if $g(u_{11}, \ldots, u_{33})$\\$\neq 0$, then

~\textendash~if $\DD_J(u_{11}, \ldots, u_{33})>0$, then the system has $6$ distinct real solutions and there can be $6$ positive solution or $2$ positive solutions;


~\textendash~if $\DD_J(u_{11}, \ldots, u_{33})<0$, then the system has $2$ distinct
real (positive) solutions.

With $2$ of these sample points, we see that  the sign of $\DD$ is not enough 
to classify  the positive solutions.  
For example, for the sample point  
{\footnotesize $(u_{11}=1, u_{12}=1, u_{13}=\frac{280264116870825}{295147905179352825856}, u_{22}=1, u_{23}=\frac{34089009205592922038535}{141080698675730650759168},$ $u_{33}=\frac{32898355113670387769001}{141080698675730650759168})$}, 
the system has $6$ distinct positive solutions. 
While for the sample point {\footnotesize$(u_{11} = 1, u_{12} = 1, u_{13} = 199008, u_{22} = 30, u_{23} = 2022, u_{33} =1)$}, the system has also
$6$ real solutions but only $2$ positive solutions.

\section{Acknowledgments}
{We began this project in NIMS Thematic Program 2014 on Applied Algebraic Geometry. We had the idea for this extended version in 2015 summer before SIAM Conference: Applied Algebraic Geometry when the two authors were hosted by NIMS, Daejeon, Korea.  
We  thank Professors Bernd Sturmfels,  Hoon Hong, Jonathan Hauenstein and  Frank Sottile for their valuable advice on this project. 
We thank Professors Mohab Safey EI Din and  Jean-Charles Faugere for their  software  advice on {\tt RAGlib} and  {\tt FGb} respectively. 
We thank the anonymous referees of ISSAC'15 for their insightful suggestions to greatly improve our work.
We  thank Professor Guillaume Moroz for his nice discussion and advice during ISSAC'15. We thank Professor Agnes Szanto for the reference on Shape Lemma (multivariate version). }

\bibliographystyle{abbrv}
\bibliography{DDmanuscript}

\footnotesize
\section*{Appendix: Literature Models}\label{appendix}

\begin{model}\cite[Random Censoring Model]{DSS2009}\label{ex:l1}
\[2p_0p_1p_2 + p_1^2p_2 + p_1p_2^2 - p_0^2p_{12} + p_1p_2p_{12}=0, \;\;\; p_0 + p_1 + p_2 + p_{12} = 1\]
\end{model}

\begin{model}\cite[$3\times 3$ Zero-Diagonal Matrix]{EJ2014}\label{ex:l2}
 {\begin{align*}
\det \left[
\begin{array}{ccc}   
    0&    p_{12}    & p_{13} \\   
    p_{21} &    0 & p_{23}\\   
    p_{31} &   p_{32} &  0
\end{array}
\right]=0,\;\;\;p_{12} + p_{13} + p_{21} + p_{23} + p_{31} + p_{32} =1
\end{align*}}
\end{model}

\begin{model}\cite[Grassmannian of $2$-planes in ${\mathbb C}^4$]{SAB2005, EJ2014}\label{ex:l3}
\[p_{12}p_{34}-p_{13}p_{24}+p_{14}p_{23}=0, \;\;\; p_{12} + p_{13} + p_{14} + p_{23} + p_{24} + p_{34} =1\]
\end{model}
\begin{model}\cite[$3\times 3$ Symmetric Matrix]{SAB2005}\label{ex:l4}
{\begin{equation*}
\det\left[
\begin{array}{cccc}   
    2p_{11} &    p_{12}    & p_{13} \\   
    p_{12} &    2p_{22}   & p_{23}\\   
    p_{13} & p_{23} & 2p_{33} 
\end{array}
\right]=0, \;\;\; p_{11} + p_{12} + p_{13} + p_{22} + p_{23} + p_{33} =1
\end{equation*}}
\end{model}

\begin{model}\cite[Bernoulli $3\times 3$ Coin]{SAB2005}\label{ex:l5}
{\footnotesize \begin{equation*}
\det\left[
\begin{array}{cccc}   
    12p_{0} &    3p_{1}    & 2p_{2} \\   
    3p_{1} &    2p_{2}   & 3p_{3}\\   
    2p_{2} & 3p_{3} & 12p_{4} 
\end{array}
\right]=0,\;\;\; p_{0} + p_{1} + p_{2} + p_{3} + p_{4}  =1
\end{equation*}}
\end{model}

\begin{model}\cite[$3\times 3$ Matrix]{SAB2005}\label{ex:l6}
{\footnotesize \begin{equation*}
\det\left[
\begin{array}{cccc}   
    p_{00} &    p_{01}    & p_{02} \\   
    p_{10} &    p_{11}   & p_{12}\\   
    p_{20} & p_{21} &  p_{22} 
\end{array}
\right]=0, \;\;\; p_{00} + p_{01} + p_{02} + p_{10} + p_{11} + p_{12} +  p_{20} + p_{21} + p_{22} =1
\end{equation*}}
\end{model}

\begin{model}\cite[Projection of $3\times 4$ Matrix]{EJ2014}\label{ex:l7}
\[p_{12}p_{23}p_{34}-p_{12}p_{24}p_{33}-p_{13}p_{22}p_{34}+p_{13}p_{24}p_{32}+p_{14}p_{22}p_{33}-p_{14}p_{23}p_{32}=0,\]
\[p_{12} + p_{13} + p_{14} + p_{22} + p_{23} + p_{24} +  p_{32} + p_{33} + p_{34} =1\]
\end{model}

\begin{model}\cite[Juke-Cantor Model, Example 18]{SAB2005}\label{ex:l8}
\[q_{000}q_{111}^2 - q_{011} q_{101} q_{110}=0, \;\;\; p_{123} + p_{dis} + p_{12} + p_{13} + p_{23}=1\]
where\\
$q_{111} = p_{123} + \frac{p_{dis}}{3} - \frac{p_{12}}{3} - \frac{p_{13}}{3} - \frac{p_{23}}{3}$, 
$q_{110} = p_{123} -  \frac{p_{dis}}{3} + p_{12} - \frac{p_{13}}{3} - \frac{p_{23}}{3}$,\\
$q_{101} = p_{123} -  \frac{p_{dis}}{3} - \frac{p_{12}}{3} + p_{13} - \frac{p_{23}}{3}$, 
$q_{011} = p_{123} -  \frac{p_{dis}}{3} - \frac{p_{12}}{3} - \frac{p_{13}}{3} + p_{23}$,\\
$q_{000} = p_{123} + p_{dis} + p_{12} + p_{13} + p_{23}$.
\end{model}

\begin{model}\cite[Example 16]{SAB2005}\label{ex:l9}
\[q_2q_7-q_1q_8=0, \;\;\;q_3q_6-q_5q_4=0, \;\;\;
p_1 + p_2 + p_3 + p_4 + p_5 + p_6 + p_7 + p_8=1\]
where\\
$q_1 = p_1 + p_2 + p_3 + p_4 + p_5 + p_6 + p_7 + p_8$, 
$q_2 = p_1 - p_2 + p_3 - p_4 + p_5 - p_6 + p_7 - p_8$,\\
$q_3 =  p_1 + p_2 - p_3 - p_4 + p_5 + p_6 - p_7 - p_8$,
$q_4 =  p_1 - p_2 - p_3 + p_4 + p_5 - p_6 - p_7 + p_8$,\\
$q_5 =  p_1 + p_2 + p_3 + p_4 - p_5 - p_6 - p_7 - p_8$,
$q_6 =  p_1 - p_2 + p_3 - p_4 - p_5 + p_6 - p_7 + p_8$,\\
$q_7 =  p_1 + p_2 - p_3 - p_4 - p_5 - p_6 + p_7 + p_8$,
$q_8 =  p_1 - p_2 - p_3 + p_4 - p_5 + p_6 + p_7 - p_8$.
\end{model}

\end{document}